\theoremstyle{remark}
\newtheorem{theorem}{Theorem}
\newtheorem{lemma}{Lemma}
\newtheorem{condition}{Condition}
\theoremstyle{remark}
\newtheorem{example}{Example}
\title{A Generalisation of Interlinked Cycle Structures and Their Index Coding Capacity}
\begin{document}

\author{Mahesh~Babu~Vaddi~and~B.~Sundar~Rajan\\ 
 Department of Electrical Communication Engineering, Indian Institute of Science, Bengaluru 560012, KA, India \\ E-mail:~\{vaddi,~bsrajan\}@iisc.ac.in }
 
\maketitle
\begin{abstract}
Cycles and Cliques in a side-information graph reduce the number of transmissions required in an index coding problem. Thapa, Ong and Johnson defined a more general form of overlapping cycles, called the interlinked-cycle (IC) structure, that generalizes cycles and cliques. They proposed a scheme, that leverages IC structures in digraphs to construct scalar linear index codes. In this paper, we extend the notion of interlinked cycle structure to define more generalised graph structures called overlapping interlinked cycle (OIC) structures. We prove the capacity of OIC structures by giving an index code with length equal to the order of maximum acyclic induced subgraph (MAIS) of OIC structures.   
\end{abstract}

\section{Introduction and Background}
\label{sec1}
A single unicast index coding problem, comprises a transmitter that has a set of $K$ messages, $X=\{x_1,x_2,\ldots,x_K\}$, and a set of $K$ receivers, $R=\{R_1,R_2,\ldots,R_K\}$. Each receiver, $R_k=(\mathcal{W}_k,\mathcal{K}_k)$, knows a subset of messages, $\mathcal{K}_k \subseteq X$, called its side-information, and wants to know one message, $W_k =\{x_k\}$, called its \textit{Want-set}. The transmitter can take cognizance of the side-information of the receivers and broadcast coded messages, called the index code. The objective is to minimize the number of coded transmissions, called the length of the index code, such that each receiver can decode its demanded message using its side-information and the coded messages.  

The index coding with side-information was introduced by Birk and Kol in \cite{BiK}. Single unicast index coding problems were studied in \cite{YBJK}. A single unicast index coding problem (SUICP) can be represented by using a graph $G$ with $K$ vertices $\{x_1,x_2,\ldots,x_K\}$. In $G$, there exists an edge from $x_i$ to $x_j$ if the receiver wanting $x_i$ knows $x_j$. This graph is called the side-information graph of SUICP.

In an index coding problem with side-information graph $G$, we assume that the messages belongs to a finite alphabet $\mathcal{A}$. The solution of an index coding problem may be linear or nonlinear. The solution must specify a finite alphabet $\mathcal{A}_P$ to be used by the transmitter, and an encoding scheme $\varepsilon:\mathcal{A}^{K} \rightarrow \mathcal{A}_{P}$ such that every receiver is able to decode the wanted message from the $\varepsilon(x_1,x_2,\ldots,x_K)$ and the side-information. The minimum encoding length $l=\lceil log_{2}|\mathcal{A}_{P}|\rceil$ for messages that are $t$ bit long ($\vert\mathcal{A}\vert=2^t$) is denoted by $\beta_{t}(G)$. The broadcast rate of the index coding problem is defined \cite{ICVLP} as,
\begin{align*}
\beta(G) \triangleq   \inf_{t} \frac{\beta_{t}(G)}{t}.
\end{align*}

The capacity $C(G)$ for the index coding problem with side-information graph $G$ is defined as the maximum number of message symbols transmitted per index code symbol such that every receiver gets its wanted message symbol. The broadcast rate and capacity are related as 
\begin{align*}
C(G)=\dfrac{1}{\beta(G)}.
\end{align*}

In this paper, we refer the capacity of a single unicast index coding problem with side-information graph $G$ as the index coding capacity of side-information graph $G$. In \cite{MCJ}, Maleki \textit{et.al.} found the index coding capacity of some side-information graphs which have a circular symmetry by using interference alignment technique. However, in general, finding the index coding capacity is a complicated problem because one need to consider all possible linear and non linear mappings and dimensions to evaluate capacity. 

For a graph $G$, the order of an induced acyclic sub-graph formed by removing the minimum number of vertices in $G$, is called Maximum Acyclic Induced Subgraph ($MAIS(G)$). In \cite{YBJK}, it was shown that $MAIS(G)$ lower bounds the broadcast rate of the index coding problem described by $G$. That is, 
\begin{align}
\label{mr1}
\beta(G) \geq MAIS(G).
\end{align}

\subsection{Interlinked Cycles and Optimal Index Codes} 
\label{sec5}
In \cite{ICC}, Thapa, Ong and Johnson defined a special graph structures called interlinked cycle (IC) structure. The interlinked cycle structures generalises the notion of cycles and cliques. Consider a graph $G$ with $K$ vertices $\{x_1,x_2,\ldots,x_K\}$ having the following property: $G$ has a vertex set $V_I$ such that for any ordered pair $(x_i \in V_I,x_j \in V_I)$ and $x_i \neq x_j$, there is a path from $x_i$ to $x_j$, and the path does not include any other vertex in $V_I$ except $x_i$ and $x_j$. The set $V_I$ is called inner vertex set and the vertices in $V_I$ are called inner vertices. A path in which only the first and the last vertices are from $V_I$, and they are distinct, is called an $I$-path. If the first and last vertices are the same, then it is called an $I$-cycle. If the directed graph $G$ satisfies the four conditions given below, it is called an interlinked-cycle structure.

\begin{itemize}
\item There is no $I$-cycle in $G$.
\item Every non-inner vertex must be present in atleast one $I$-path.
\item For all ordered pairs of inner vertices $(x_i,x_j)$, $x_i \neq x_j$, there is only one $I$-path from $x_i$ to $x_j$ in $G$.
\item There exist no cycles among non-inner vertices
\end{itemize}

Let $G$ be the IC structure with $K$ vertices $\{x_1,x_2,\ldots,x_{K}\}$ and $N$ inner vertices $V_I=\{x_1,x_2,\ldots,x_{N}\}$. Let the $K-N$ non-inner vertices be $V_{NI}=\{x_{N+1},x_{N+2},\ldots,x_K\}$. The following coded symbols for an IC structure $G$ with $\vert V(G) \vert=K$ was proposed in \cite{ICC}.

\begin{itemize}
\item A code symbol is obtained by the bitwise XOR (denoted by $\oplus$) of messages present in the inner vertex set $V_I$, i.e.,
\begin{align}
\label{code1}
y_I=\bigoplus_{i=1}^{N} x_i.
\end{align}
\item For each $x_j \in V_{NI}$, for $j \in [N+1:K]$, a code symbol is obtained as given below.
\begin{align}
\label{code2}
y_{j}=x_j \bigoplus_{x_q \in N_{G}^+(x_j)} x_q.
\end{align}
where $N_{G}^+(x_j)$ is the out-neighborhood of $x_j$ in the IC structure $G$.
\end{itemize}

The length of index code constructed above is $K-N+1$. That is, for an ICP whose side-information graph is an IC structure $G$ with $K$ vertices, the index code given in \eqref{code1} and \eqref{code2} give a savings of $N-1$ transmissions when compared with naive technique of broadcasting all $K$ messages. Thapa, Ong and Johnson proved that the constructed codes in \eqref{code1} amd \eqref{code2} are of optimal length.

The following decoding procedure is given in \cite{ICC} to decode the index codes given by ICC scheme.
\begin{itemize}
\item For $j \in [N+1:K]$, the message $x_j$ ($x_j$ corresponding to a non-inner vertex) can be decoded from $y_j$ given in \eqref{code2}.
\item For each $x_k \in V_I$, a directed rooted tree (denoted by $T_k$) in $G$ can be found with $x_k$ as the root vertex and all other inner vertices $V_I \setminus \{x_k\}$ are the leaves. The inner vertex $x_k \in V_I$ is decoded by computing the XOR of all index code symbols corresponding to the non-leaf vertices at depth greater than zero in $T_k$ and $y_I$, where $T_k$ is the rooted tree with $x_k$ as the root node and all other inner vertices as the leaves.
\end{itemize}
The following two examples illustrates IC structures and their decoding.
 
\begin{example}
\label{ex1}
Consider an SUICP with side-information graph $G$ given in Fig. \ref{ex1fig}. $G$ is an IC structure with $K=5,N=3$ and inner vertices $V_I=\{x_1,x_2,x_3\}$. Hence, for this side-information graph, we have $K-N+1=3$. An optimal length index code for this ICP obtained from \eqref{code1} and \eqref{code2} is $$\mathfrak{C}=\{\underbrace{\underbrace{x_1+x_2+x_3}_{\in V_I}}_{y_I},~\underbrace{\underbrace{x_4}_{\in V_{NI}}+\underbrace{x_2}_{N_{G}^+(x_4)}}_{y_4},~\underbrace{\underbrace{x_5}_{\in V_{NI}}+\underbrace{x_1}_{N_{G}^+(x_5)}\}}_{y_5}.$$ 
Trees $T_1,T_2$ and $T_3$ corresponding to the inner vertices $x_1,x_2$ and $x_3$ are given in Figure \ref{ex1fig2}. The decoding of each message symbol from $\mathfrak{C}$ is summarised in Table \ref{table111}. 

In the rest of the paper, we use $\gamma_k$ to denote the index code symbols used by $R_k$ to decode $x_k$ and $\tau_k$ to denote the sum of index code symbols present in $\gamma_k$. In the IC structures, we use orange color to identify inner vertices.

\begin{table}[ht]
\centering
\setlength\extrarowheight{3pt}
\begin{tabular}{|c|c|c|c|}
\hline
$x_k$ & Tree &$\gamma_k$ & $\tau_k$ \\
\hline
$x_1$ & $T_1$& $y_I,y_4$& $x_1+\underbrace{x_3+x_4}_{\text{side-information}}$ \\
\hline
$x_2$ & $T_2$ & $y_I,y_5$ & $x_2+\underbrace{x_3+x_5}_{\text{side-information}}$  \\
\hline
$x_3$ &$T_3$ & $y_I,y_4,y_5$ &$x_3+\underbrace{x_4+x_5}_{\text{side-information}}$  \\
\hline
$x_4$ & $\in V_{NI}$ & $y_2$ & $x_4+\underbrace{x_2}_{\text{side-information}}$ \\
\hline
$x_5$ & $\in V_{NI}$& $y_5$& $x_5+\underbrace{x_1}_{\text{side-information}}$ \\
\hline
\end{tabular}
\vspace{5pt}
\caption{Decoding of ICP described by Figure \ref{fig2}}
\label{table111}
\end{table}

\begin{figure}[t]
\centering
\includegraphics[scale=0.4]{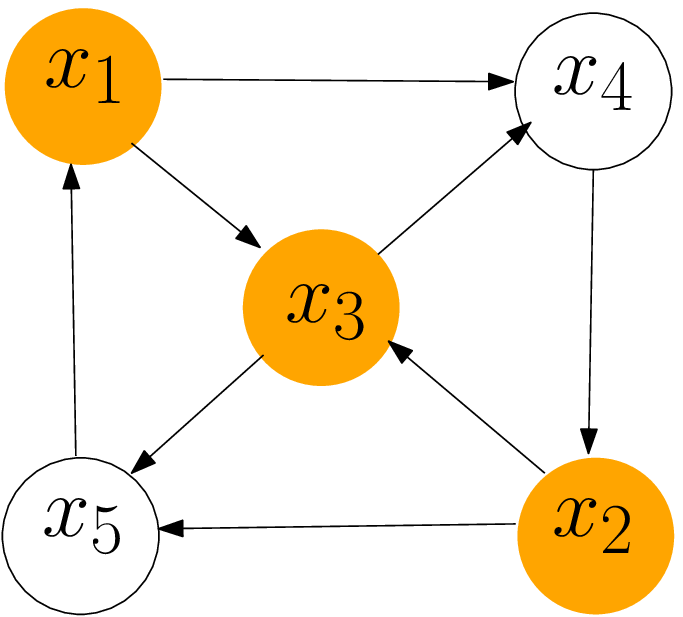}
\caption{Interlinked cycle structure with $V_I=\{x_1,x_2,x_3\}$.}
\label{ex1fig}
\end{figure}
\begin{figure}[t]
\centering
\includegraphics[scale=0.4]{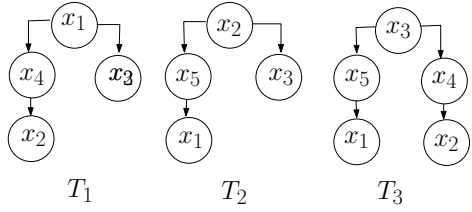}
\caption{Trees of inner vertices of IC structure given in Fig. \ref{ex1fig}.}
\label{ex1fig2}
\end{figure}
\end{example}

\begin{example}
\label{ex2}
Consider an SUICP with side-information graph $G$ given in Fig. \ref{ex2fig}. $G$ is an IC structure with $K=7,N=4$ and inner vertices $V_I=\{x_1,x_2,x_3,x_4\}$. Hence, for this side-information graph, we have $K-N+1=4$. An optimal length index code $\mathfrak{C}$ for this ICP obtained from \eqref{code1} and \eqref{code2} is 
$$\{\underbrace{\underbrace{x_1+x_2+x_3+x_4}_{\in V_I}}_{y_I},~\underbrace{\underbrace{x_5}_{\in V_{NI}}+\underbrace{x_4}_{N_{G}^+(x_5)}}_{y_5},~\underbrace{\underbrace{x_6}_{\in V_{NI}}+\underbrace{x_1+x_3+x_4}_{N_{G}^+(x_6)}\}}_{y_6}.$$

Trees $T_1,T_2,T_3$ and $T_4$ corresponding to the inner vertices $x_1,x_2,x_3$ and $x_3$ are given in Figure \ref{ex2fig2}. The decoding of each message symbol from $\mathfrak{C}$ is summarised in Table \ref{table112}. 

\begin{table}[ht]
\centering
\setlength\extrarowheight{3pt}
\begin{tabular}{|c|c|c|c|}
\hline
$x_k$ & Tree &$\gamma_k$ & $\tau_k$ \\
\hline
$x_1$ & $T_1$& $y_I,y_5$& $x_1+\underbrace{x_2+x_3+x_5}_{\text{side-information}}$ \\
\hline
$x_2$ & $T_2$ & $y_I,y_6$ & $x_2+\underbrace{x_1+x_3+x_6}_{\text{side-information}}$  \\
\hline
$x_3$ &$T_3$ & $y_I,y_6$ &$x_3+\underbrace{x_1+x_2+x_5}_{\text{side-information}}$  \\
\hline
$x_4$ & $T_4$ & $y_I$ & $x_4+\underbrace{x_1+x_2+x_3}_{\text{side-information}}$ \\
\hline
$x_5$ & $\in V_{NI}$& $y_5$& $x_5+\underbrace{x_4}_{\text{side-information}}$ \\
\hline
$x_6$ & $\in V_{NI}$& $y_6$& $x_6+\underbrace{x_1}_{\text{side-information}}$ \\
\hline
\end{tabular}
\vspace{5pt}
\caption{Decoding of ICP described by Figure \ref{fig2}}
\label{table112}
\end{table}

\begin{figure}[t]
\centering
\includegraphics[scale=0.4]{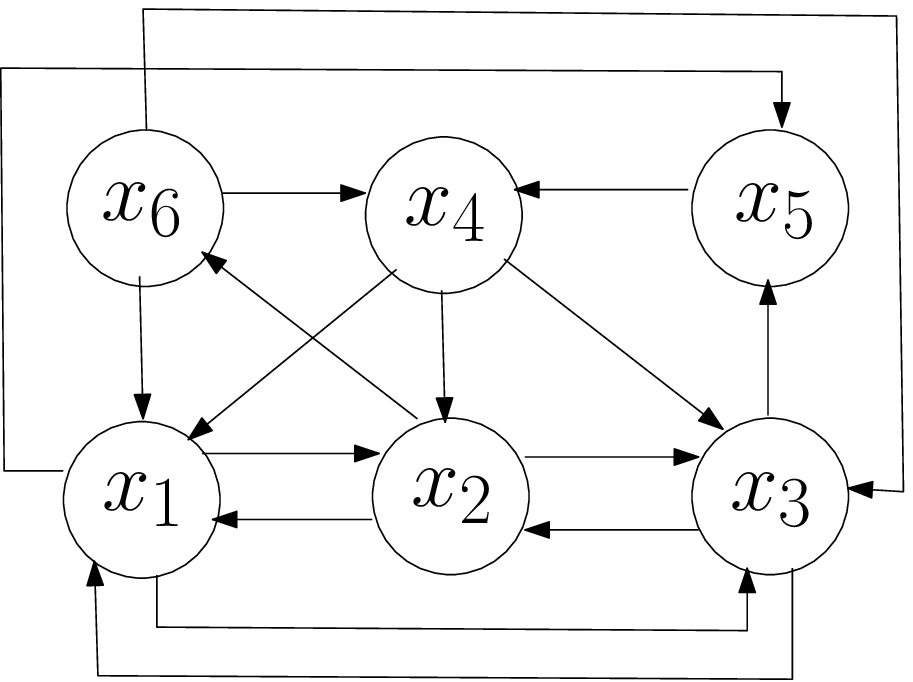}
\caption{Interlinked cycle structure with $V_I=\{x_1,x_2,x_3,x_4\}$.}
\label{ex2fig}
\end{figure}
\begin{figure}[t]
\centering
\includegraphics[scale=0.4]{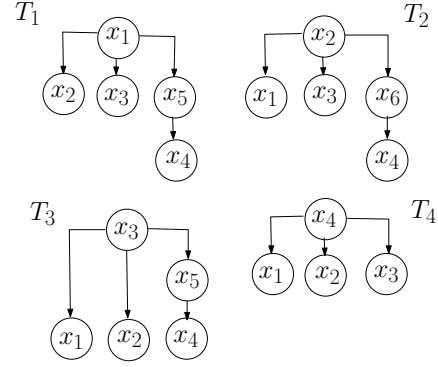}
\caption{Trees of inner vertices of IC structure given in Fig. \ref{ex2fig}.}
\label{ex2fig2}
\end{figure}
\end{example}

In \cite{VaR3}, we provided an addition to interlinked cycle structure class by providing optimal length
index codes for IC structures with one cycle among non-inner vertex set. We gave a modified code construction and modified decoding method for the IC structure with one cycle among non-inner vertex set. In \cite{VaR4}, we disproved the two conjectures given in \cite{ICC} regarding the optimality of IC structures.

\subsection{Motivation for overlapping interlinked cycle structure}
In \cite{VaR1}, we designed binary matrices of size $m \times n~(m>n)$ such that any $n$ adjacent rows of these matrices are linearly independent. We refer these matrices as Adjacent Independent Row (AIR) matrices. We used AIR matrices to give optimal length index codes for some symmetric index coding problems. In \cite{VaR2}, we give a low complexity decoding for the index coding problems which use AIR matrix as encoding matrix. The low complexity decoding method uses only a specific subset (less than the actual side-information available) of the side-information to decode a given message. If we consider side-information graphs, and obtain sub-graphs by only retaining the edges exploited for  low complexity decoding, we obtain large number of side-information graphs with known capacity and broadcast rate and which are not interlinked cycle structures. 

The below mentioned three examples are useful to understand the motivation behind studying overlapping interlinked cycle structures.
\subsection*{Motivating Example I}
Consider the side-information graph $G$ given in Figure \ref{fig2}. The broadcast rate of the index coding problem described by this side-information graph is three. To get an index code of length three, the side-information graph must have an inner vertex set with four inner vertices in it. But, no subset of size four of $\{x_1,x_2,\ldots,x_6\}$ satisfies the necessary conditions required for $V_I$. Hence, the side-information graph $G$ is not an interlinked cycle structure. 

In this paper we show that $G$ given in Figure \ref{fig2} is an overlapping interlinked cycle structure and give an index code with length MAIS(G).
\begin{figure}
\centering
\includegraphics[scale=0.4]{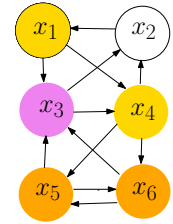}\\
\caption{Overlapping IC structure with capacity $\frac{1}{3}$.}
\label{fig2}
\end{figure}
\subsection*{Motivating Example II}
Consider the side-information graph $G$ given in Figure \ref{fig3}. The broadcast rate of the index coding problem described by this side-information graph is six. To get an index code of length six, the side-information graph must have an inner vertex set with five inner vertices in it. But, no subset of size five of $\{x_1,x_2,\ldots,x_{10}\}$ satisfies the necessary conditions required for $V_I$. Hence, the side-information graph $G$ is not an interlinked cycle structure. 

In this paper we show that $G$ given in Figure \ref{fig3} is an overlapping interlinked cycle structure give an index code with length MAIS(G).

\begin{figure}
\centering
\includegraphics[scale=0.4]{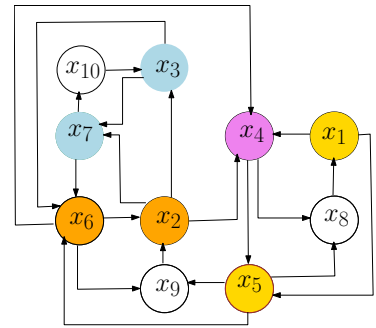}\\
\caption{Overlapping IC structure with capacity $\frac{1}{6}$.}
\label{fig3}
\end{figure}

\subsection*{Motivating Example III}
Consider the side-information graph $G$ given in Figure \ref{fig311}. The broadcast rate of the index coding problem described by this side-information graph is four. To get an index code of length four, the side-information graph must have an inner vertex set with seven inner vertices in it. But, no subset of size seven of $\{x_1,x_2,\ldots,x_{10}\}$ satisfies the necessary conditions required for $V_I$. Hence, the side-information graph $G$ is not an interlinked cycle structure. 

In this paper we show that $G$ given in Figure \ref{fig311} is an overlapping interlinked cycle structure give an index code with length MAIS(G).

\begin{figure}
\centering
\includegraphics[scale=0.5]{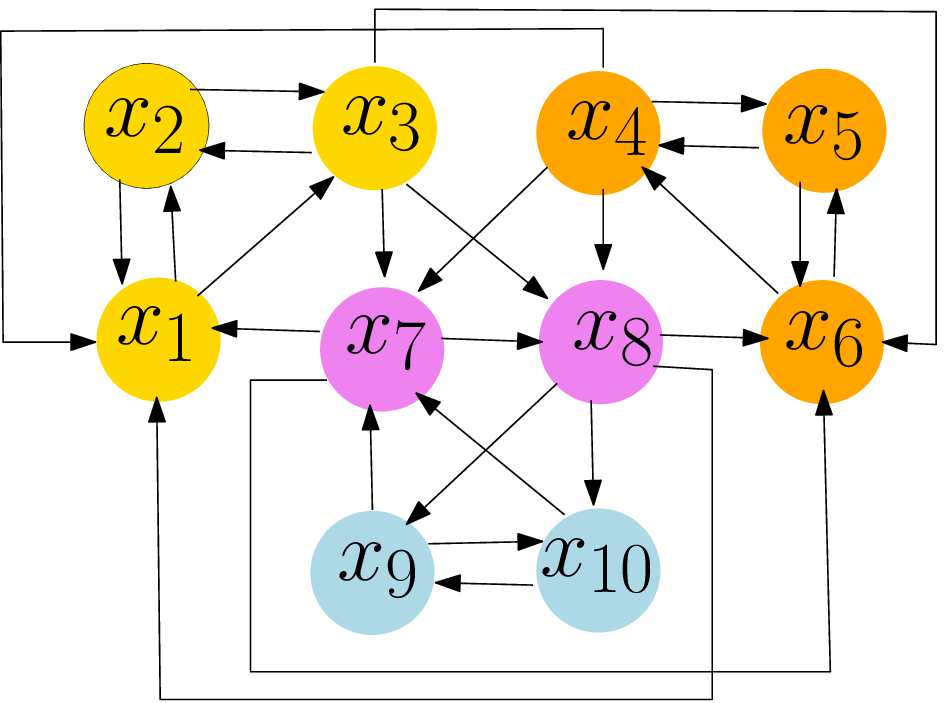}\\
\caption{Overlapping IC structure with capacity $\frac{1}{4}$.}
\label{fig311}
\end{figure}

\subsection{Contributions}
The contributions of this paper are summarized as below:
\begin{itemize}
\item  We extend the notion of interlinked cycle structure to define more generalised graph structures called overlapping interlinked cycle (OIC) structures. We give an index code for OIC structures whose length is equal to MAIS of OIC structure.
\end{itemize}

\section{Overlapping Cycle Structures}
In this section, we generalise the notion of interlinked cycle structure to define overlapping interlinked cycle structure. 

A tree is an undirected graph in which any two vertices are connected by exactly one path. That is, a tree is an acyclic connected graph. A polytree \cite{polytree} is a directed acyclic graph whose underlying undirected graph is a tree. In a polytree, there exits only one directed path from any vertex to any other vertex. 

A graph $G$ is called overlapping interlinked cycle structure if there exists a collection of subsets of vertices of $V(G)$ double indexed as $V_I^{(i,j)}$ for $i \in [1:d]$ and $j \in [1:w_i]$ such that the following conditions are satisfied.
\begin{condition}
\label{con1}
The vertex subsets $V_I^{(i,j)}$ for $i \in [1:d]$ and $j \in [1:w_i]$ form a polytree with edges existing only between a parent and child when there is a single common vertex between the parent and child. Also, the number of vertices in $V_I^{(i,j)}$ must be greater than the sum of  number of parents and children in the polytree. The polytree is shown in Figure \ref{fig43}.
\end{condition}

To define the second condition, we need to define some sets related to the vertex sets present in polytree. Let $V_I^{\text{Total}}=\cup_{i=0}^d \cup_{j=1}^{w_i} V_I^{(i,j)}$ and $V_{NI}=V(G)\setminus V_I^{\text{Total}}$. We refer vertices in $V_I^{\text{Total}}$ as inner vertices and the vertices in $V_{NI}$ as non-inner vertices. If the vertex set $V_I^{(i,j)}$ has $p$ number of parents, then, $V_I^{(i,j)}$ is having one common vertex with each of these $p$ parents. Let the set $\tilde{V}_I^{(i,j)}$ be the set after removing all the $p$ vertices from $V_I^{(i,j)}$ which are common with their $b$ parents. We have 
\begin{align*}
V_I^{\text{Total}}=\cup_{i=1}^d \cup_{j=1}^{w_i} V_I^{(i,j)}=\cup_{i=1}^d \cup_{j=1}^{w_i} \tilde{V}_I^{(i,j)}.
\end{align*}

If the vertex set $V_I^{(i,j)}$ has $c$ number of children, then, $V_I^{(i,j)}$ is having one common vertex with each of these $c$ children and $\tilde{V}_I^{(i,j)}$ consists of $c$ common vertices with its $c$ children and $|\tilde{V}_I^{(i,j)}|-c$ number of vertices which are not common to any other vertex set in the polytree. Let the $c$ children of $V_I^{(i,j)}$ in polytree be $V_I^{(i+1,j_1)},V_I^{(i+1,j_2)},\ldots,V_I^{(i+1,j_c)}$ and the corresponding common vertices be $x_{(i,j),j_1},x_{(i,j),j_2},\ldots,x_{(i,j),j_c}$ respectively.

Let $S^{(i,j),j_k}$ for $k \in [1:c]$ be the collection of all nodes $V_I^{(i^\prime,j^\prime)}$ in the polytree to which there exists a path from $V_I^{(i,j)}$ to $V_I^{(i^\prime,j^\prime)}$ through $V_I^{(i+1,j_k)}$. For every $V_I^{(i^\prime,j^\prime)} \in S^{(i,j),j_k}$, $i^\prime \in [i+1:d]$, there exists $i^\prime-i+1$ nodes of polytree present in this path including the first node $V_I^{(i,j)}$ and the last node ${V}_I^{(i^\prime,j^\prime)}$ and these $i^\prime-i+1$ nodes are connected by $i^\prime-i$ edges. Note that any two nodes in the polytree which are connected by an edge have a common vertex. Let this path be as given below.
\begin{align*}
&V_I^{(i,j)} \xrightarrow[x_{(i,j),j_k}]{}   V_I^{(i+1,j_k)} \xrightarrow[x_{(i+1,j_k),k_2}]{} V_I^{(i+2,k_2)} \xrightarrow[x_{(i+2,k_2),k_3}]{} \ldots \\& \xrightarrow[x_{(i^\prime-2,k_{i^\prime-i-2}),k_{i^\prime-i-1}}]{} V_I^{(i^\prime-1,k_{i^\prime-i-1})} \xrightarrow[x_{(i^\prime-1,k_{i^\prime-i-1}),k^\prime}]{k^\prime=j_{i^\prime-i}} V_I^{(i^\prime,j^\prime)}.
\end{align*}
Let 
\begin{align*}
&V_P^{(i,j),(i^\prime,j^\prime)}=V_I^{(i+1,j_k)} \bigcup \left(\bigcup\limits_{s=2}^{i^\prime-i} V_I^{(i+s,k_s)}\right)\setminus \\& \{x_{(i,j),j_k},x_{(i+1,j_k),k_2},x_{(i+2,k_2),k_3},\ldots,x_{(i^\prime-1,j_{i^\prime-i-1}),j^\prime}\}.
\end{align*}

That is, $V_P^{(i,j),(i^\prime,j^\prime)}$ is the union of the $i^\prime-i$ vertex sets present in the path from $V_I^{(i,j)}$ to $V_I^{(i^\prime,j^\prime)}$ excluding the vertex set $V_I^{(i,j)}$ and after removing $i^\prime-i$ common vertices.

\begin{condition}
\label{con2}
For every $\tilde{V}_I^{(i,j)}$, from every vertex in $\tilde{V}_I^{(i,j)}$, there should be only one path to every non-common vertex in that $\tilde{V}_I^{(i,j)}$ such that the path does not include any other inner vertices other than the first and last vertex. From every vertex in $\tilde{V}_I^{(i,j)}$, either there can be only one path to the common vertex $x_{(i,j),j_k}$ for $k \in [1:c]$ or there can be only one path to every vertex in $V_P^{(i,j),(i^\prime,j^\prime)}$ for any $V_I^{i^\prime,j^\prime}$ present in $S^{(i,j),j_k}$ such that the path does not include any other inner vertices other than the first and last vertex. All the paths mentioned in this condition are referred as $I$-paths in this paper.
\end{condition}

\begin{condition}
\label{con3}
For every inner vertex $x_{(i,j),k_1} \in V_I^{(i,j)}$, there exists no cycle in $G$ that includes $x_{(i,j),k_1}$ and vertices only from the set $V(G) \setminus V_I^{(i,j)}$. The graph $G$ should not have any cycle with only non-inner vertices in it.
\end{condition}
\begin{condition}
\label{con4}
Every non-inner vertex must be present in atleast one $I$-path. All the outgoing paths from a non-inner vertex terminate at the vertices of only one vertex set $V_I^{(i,j)}$ for $i \in [0:d]$ and $j \in [1:w_i]$.
\end{condition}
\begin{figure*}
\centering
\includegraphics[scale=0.55]{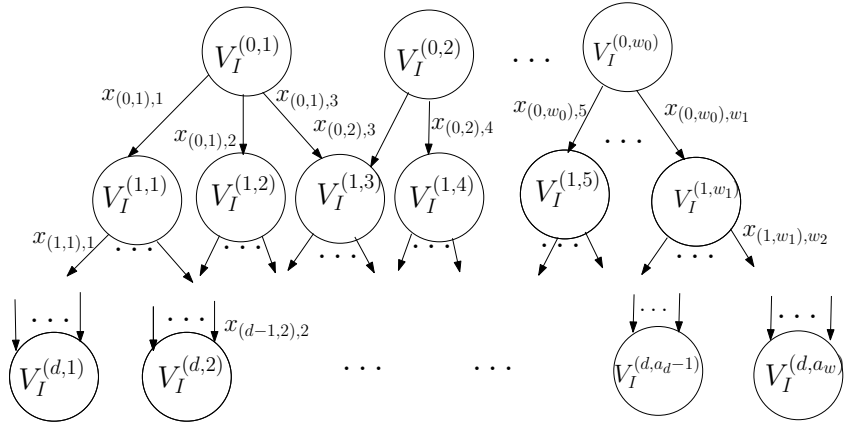}\\
\caption{Polytree structure of vertex subsets.}
\label{fig43}
\end{figure*}

In this paper, we refer the vertex subsets $V_I^{(i,j)}$ for $i \in [0:d]$ and $j \in [1:w_i]$ as semi-inner vertex sets. The following three examples illustrate Condition \ref{con2}.
\begin{example}
\label{ex5}
Consider the side-information $G$ given in Figure \ref{fig2}. In $G$, we have $V_I^{(0,1)}=\{x_1,x_3,x_4\}$ and $V_I^{(1,1)}=\{x_3,x_5,x_6\}$. The polytree structure of two semi-inner vertex sets are shown on Figure \ref{fig34}. In this graph $G$, the details of $I$-paths are mentioned in Table \ref{table2}. In $G$, the vertex $x_4$, instead of having an $I$-path to $x_3$, it has $I$-paths to $\{x_3,x_5,x_6\}\setminus \{x_3\}=\{x_5,x_6\}$.

\begin{table}[ht]
\centering
\setlength\extrarowheight{3pt}
\begin{tabular}{|c|c|c|c|}
\hline
$x_k$ &$I$-path & Depth at which & Depth at which\\
~~&~~& $I$-path originates & $I$-path terminates \\
\hline
$x_1$ & $x_3,x_4$ & 0 & 0 \\
\hline
$x_3$ & $x_1,x_4$ & 0 & 0  \\
\hline
$x_4$ & $x_1,\{x_5,x_6\}$ & 0 & 1 \\
\hline
$x_5$ & $x_3,x_6$ & 1 & 1 \\
\hline
$x_6$ & $x_3,x_5$ & 1 & 1  \\
\hline
\end{tabular}
\vspace{5pt}
\caption{$I$-paths present in Figure \ref{fig2}}
\label{table2}
\end{table}

\begin{figure}
\centering
\includegraphics[scale=0.4]{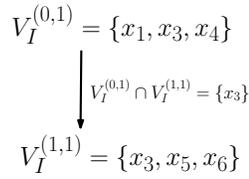}\\
\caption{Polytree of semi-inner vertex sets of OIC structure given in Figure \ref{fig2}.}
\label{fig34}
\end{figure}
\end{example}

\begin{example}
\label{ex6}
Consider the side-information graph given in Figure \ref{fig3}. In the graph $G$, we have $V_I^{(0,1)}=\{x_1,x_4,x_5\},V_I^{(1,1)}=\{x_2,x_4,x_6\}$ and $V_I^{(2,1)}=\{x_3,x_6,x_7\}$. The polytree structure of the three semi-inner vertex sets are shown on Figure \ref{fig35}. In $G$, the vertex $x_5$, instead of having an $I$-path to $x_4$, it has $I$-paths to $\{x_2,x_4,x_6\}\setminus \{x_4\}=\{x_2,x_6\}$. Similarly, the vertex $x_2$, instead of having an $I$-path to $x_6$, it has $I$-paths to $\{x_3,x_6,x_7\}\setminus \{x_6\}=\{x_3,x_7\}$.  The details of $I$-paths in $G$ are summarised in Table \ref{table3}.

\begin{table}[ht]
\centering
\setlength\extrarowheight{3pt}
\begin{tabular}{|c|c|c|c|}
\hline
$x_k$ &$I$-path & Depth at which & Depth at which\\
~~&~~& $I$-path originates & $I$-path terminates \\
\hline
$x_1$ & $x_4,x_5$& 0 & 0   \\
\hline
$x_4$ & $x_1,x_5$& 0 & 0  \\
\hline
$x_5$ & $x_1,\{x_2,x_6\}$& 0 & 1  \\
\hline
$x_2$ & $x_4,x_6$& 1& 1 \\
\hline
$x_6$ & $x_2,\{x_3,x_7\}$& 1 & 2  \\
\hline
$x_7$ & $x_4,x_3$& 2& 2 \\
\hline
$x_3$ & $x_4,x_7$& 2& 2 \\
\hline
\end{tabular}
\vspace{5pt}
\caption{$I$-paths present in Figure \ref{fig3}}
\label{table3}
\end{table}

\begin{figure}
\centering
\includegraphics[scale=0.4]{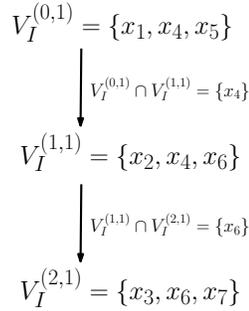}\\
\caption{Polytree of semi-inner vertex sets of OIC structure given in Figure \ref{fig3} and Figure \ref{fig351}.}
\label{fig35}
\end{figure}
\end{example}

\begin{example}
\label{ex61}
Consider the side-information given in Figure \ref{fig351}. In the graph $G$, we have $V_I^{(0,1)}=\{x_1,x_4,x_5\},V_I^{(1,1)}=\{x_2,x_4,x_6\}$ and $V_I^{(2,1)}=\{x_3,x_6,x_7\}$. The polytree structure of these side-information graph is shown in Figure \ref{fig35}. Note that the only difference between Figure \ref{fig3} and Figure \ref{fig351} is that the vertex $x_5$ have $I$-paths to $\{x_3,x_7\}$ instead of having an $I$-path to $x_6$. Hence, the $I$-path of $x_5$ is terminated at depth two instead of depth one. The details of $I$-paths are mentioned in Table \ref{table3}. 

\begin{figure}
\centering
\includegraphics[scale=0.4]{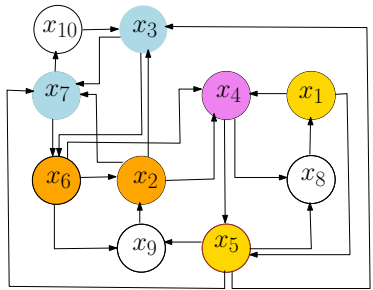}\\
\caption{Overlapping IC structure with capacity $\frac{1}{5}$.}
\label{fig351}
\end{figure}

\begin{table}[ht]
\centering
\setlength\extrarowheight{3pt}
\begin{tabular}{|c|c|c|c|}
\hline
$x_k$ &$I$-path & Depth at which & Depth at which\\
~~&~~& $I$-path originates & $I$-path terminates \\
\hline
$x_1$ & $x_4,x_5$ & 0 & 0 \\
\hline
$x_4$ & $x_1,x_5$& 0 & 0  \\
\hline
$x_5$ & $x_1,x_2,\{x_3,x_7\}$& 0 & 2  \\
\hline
$x_6$ & $x_2,x_4$& 1 & 1 \\
\hline
$x_2$ & $x_6,\{x_3,x_7\}$& 1& 2 \\
\hline
$x_7$ & $x_6,x_3$&2 & 2 \\
\hline
$x_3$ & $x_6,x_7$& 2&  2\\
\hline
\end{tabular}
\vspace{5pt}
\caption{$I$-paths present in Figure \ref{fig351}}
\label{table4}
\end{table}
\end{example}

The graphs given in Figure \ref{fig2}, Figure \ref{fig3} and Figure \ref{fig351} satisfy all the four conditions given above. Hence, they are overlapping interlinked cycle structures.

\subsection{Index code construction for Overlapping Interlinked Cycle Structure}
Consider an  index coding problem whose side-information graph is a Overlapping Interlinked Cycle (OIC) structure. in the following two steps, we give an index code of length $|V_{NI}|+s$. The index code comprises one code symbol for every semi-inner vertex set (total $s$ index code symbols for $s$ semi-inner vertex sets) and one index code symbol for every non-inner vertex ($|V_{NI}|$ index code symbols for $|V_{NI}|$ non-inner vertices).

\begin{itemize}
\item A code symbol is obtained by the bitwise XOR of messages present in the semi-inner vertex set $V_I^{(i,j)}$ for every $i \in [0:d]$ and every $j \in [1:w_i]$, i.e.,
\begin{align}
\label{code11}
y_I^{(i,j)}=\bigoplus_{k=1}^{|V_I^{(i,j)}|} x_{(i,j),k}.
\end{align}
\item For each $x_k \in V_{NI}$, a code symbol is obtained as given below.
\begin{align}
\label{code21}
y_{k}=x_k \bigoplus_{x_q \in N_{G}^+(x_k)} x_q.
\end{align}
where $N_{G}^+(x_k)$ is the out-neighborhood of $x_k$ in the IC structure $G$.
\end{itemize}
\subsection{Decoding procedure}
In this subsection, we give the decoding procedure for the index code constructed from \eqref{code11} and \eqref{code21} for OIC structures. 


To establish decoding procedure for OIC structures, we define the tree $T^{(i,j)}_k$ for every vertex $x_{(i,j),k} \in \tilde{V}^{(i,j)}_I$ for every $i \in [0:d]$ and $j \in [1:w_i]$. Let the $I$-paths originating from $x_{(i,j),k}$ pass through $b_k$ semi-inner vertex sets at depth $i+k$ for $k \in [0:t-i]$. Note that $b_0=1$ follows from the fact that $I$-paths originating from $x_{(i,j),k}$ pass through only one semi-inner vertex set $V^{(i,j)}_I$ at depth $i$. Let $t=\sum_{k=0}^{t-i} b_k$. From the definition of polytree, these $t$ semi-inner vertex sets are represented by $t$ nodes and these $t$ nodes are connected by $t-1$ directed edges in the polytree. Note that every edge in the semi-inner vertex set polytree represents one vertex in $G$ which is common to both the parent and child connected by this edge. Let $V_k^{(i,j)}$ be the union of $t$ semi-inner vertex sets after deleting the $t-1$ common vertices belonging to $t-1$ edges connecting the $t$ semi-inner vertex sets. That is, the cardinality of $V_k^{(i,j)}$ is $t$ less than that of the cardinality of union of $t$ semi-inner vertex sets. For $x_{(i,j),k} \in V_I^{(i,j)}$, because of the presence of $I$-paths from $x_{(i,j),k}$ to all other vertices in $V_k^{(i,j)}$, a directed rooted tree (denoted by $T_k^{(i,j)}$) in $G$ can be found with $x_{(i,j),k)}$ as the root vertex and all other inner vertices $V_k^{(i,j)} \setminus \{x_{(i,j),k}\}$ as the leaves.

From \eqref{code11}, all the message symbols in a semi-inner vertex set $V^{(i,j)}_I$ are encoded into one index code symbol $y_I^{(i,j)}$. Let $w^{(i,j)}_k$ be the XOR of $t$ index code symbols corresponding to the $t$ semi-inner vertex sets through which the $I$-paths originating from  $x_{(i,j),k}$ are pass through. That is, $w^{(i,j)}_k$ is the XOR of the message symbols present in $V_k^{(i,j)}$. 

Theorem \ref{dp} given below gives the decoding procedure for the index code constructed for OIC structures. The decoding procedure is same as that of the decoding procedure given by Thapa, Ong and Johnson in \cite{ICC} for IC structures except that the tree $T_k$ needs to be replaced by tree $T^{(i,j)}_k$ and $y_k$ needs to be replaced with $w^{(i,j)}_k$.
\begin{theorem}
\label{dp}
For any OIC structure, the index code constructed from \eqref{code11} and \eqref{code21} can be decoded by using the given below method.
\begin{itemize}
\item For $x_k \in V_{NI}$, the message $x_k$ can be decoded from $y_k$ given in \eqref{code21}.
\item The inner vertex $x_{(i,j),k} \in \tilde{V}_I^{(i,j)}$ is decoded by computing the XOR of all index code symbols corresponding to the non-leaf vertices at depth greater than zero in $T_k^{(i,j)}$ and $w_k^{(i,j)}$, where $T_k^{(i,j)}$ is the rooted tree with $x_{(i,j),k}$ as the root node and all other inner vertices in $V_k^{(i,j)}$ as the leaves.
\end{itemize}
\end{theorem}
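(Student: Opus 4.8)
\emph{Proof plan.} The plan is to reduce the OIC decoding to the XOR‑telescoping argument that Thapa, Ong and Johnson use for IC structures, after first isolating the handful of structural facts about $T_k^{(i,j)}$, $V_k^{(i,j)}$ and $w_k^{(i,j)}$ that make that telescoping close. The non‑inner case is immediate: by the definition of the side‑information graph the receiver wanting $x_k$ knows exactly $N_G^+(x_k)$, so from $y_k=x_k\oplus\bigoplus_{x_q\in N_G^+(x_k)}x_q$ in \eqref{code21} it forms $\bigoplus_{x_q\in N_G^+(x_k)}x_q$ out of its side‑information and recovers $x_k$.

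Now fix an inner vertex $x_{(i,j),k}\in\tilde V_I^{(i,j)}$, write $\mathcal{K}=N_G^+(x_{(i,j),k})$ for its side‑information, and put $\tau:=w_k^{(i,j)}\oplus\bigoplus_{v}y_v$, where the XOR runs over the non‑leaf vertices $v$ of $T_k^{(i,j)}$ at depth greater than zero (so the root is excluded). I would establish three preliminary facts. (A) The union of the $I$‑paths emanating from $x_{(i,j),k}$ is a genuine directed rooted tree $T_k^{(i,j)}$ whose leaves are exactly $V_k^{(i,j)}\setminus\{x_{(i,j),k}\}$ and whose non‑root, non‑leaf vertices are all non‑inner (so that $y_v$ is defined by \eqref{code21} for every $v$ in the sum); this follows from the uniqueness of $I$‑paths in Condition \ref{con2} — two $I$‑paths meeting at a common vertex and then extended to a common inner target would give two $I$‑paths to that target — together with the fact that an $I$‑path carries no inner vertex other than its endpoints. (B) For every non‑leaf vertex $v$ of $T_k^{(i,j)}$ at depth at least one, $N_G^+(v)=\mathrm{Ch}(v)$, the set of children of $v$ in $T_k^{(i,j)}$. (C) By Condition \ref{con1} the $t$ semi‑inner sets whose symbols are XORed to form $w_k^{(i,j)}$ induce a connected subtree of the polytree of Figure \ref{fig43} with $t-1$ edges, each edge contributing one vertex that lies in exactly two of these sets; such vertices occur an even number of times in $\bigoplus y_I^{(\cdot)}$ and cancel, so $w_k^{(i,j)}=\bigoplus_{v\in V_k^{(i,j)}}x_v=x_{(i,j),k}\oplus\bigoplus_{u\ \mathrm{leaf}}x_u$.

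Granting (A)--(C), the evaluation of $\tau$ is a routine XOR count that I would not grind through in detail: using $y_v=x_v\oplus\bigoplus_{q\in\mathrm{Ch}(v)}x_q$ from (B), the child‑terms telescope to $\bigoplus_{u:\,\mathrm{depth}(u)\ge 2}x_u$, which combined with the self‑terms $\bigoplus_v x_v$ over the depth‑$\ge 1$ non‑leaves leaves the XOR of (the non‑leaf depth‑one vertices) and (the leaves at depth $\ge 2$); XOR‑ing in $w_k^{(i,j)}=x_{(i,j),k}\oplus\bigoplus_{u\ \mathrm{leaf}}x_u$ from (C) cancels the depth‑$\ge 2$ leaves and merges the two surviving depth‑one families, giving $\tau=x_{(i,j),k}\oplus\bigoplus_{w:\,\mathrm{depth}(w)=1}x_w$. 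The depth‑one vertices are precisely the children of the root in $T_k^{(i,j)}$, hence out‑neighbours of $x_{(i,j),k}$ in $G$ and so elements of $\mathcal{K}$; the receiver forms $\bigoplus_{w:\,\mathrm{depth}(w)=1}x_w$ from its side‑information and obtains $x_{(i,j),k}=\tau\oplus\bigoplus_{w:\,\mathrm{depth}(w)=1}x_w$, which, with the non‑inner case, proves the theorem.

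The step I expect to be the main obstacle is fact (B): that an internal tree vertex has no out‑edge escaping the tree. My plan for it is: let $v$ be such a vertex (necessarily non‑inner) and $v\to q$ an edge of $G$; by Condition \ref{con4} $q$ lies on some $I$‑path, whose forward portion gives a directed path $q\rightsquigarrow x_b$ to an inner vertex $x_b$ with non‑inner interior. Splicing the tree path $x_{(i,j),k}\rightsquigarrow v$, the edge $v\to q$, and $q\rightsquigarrow x_b$ produces a walk from $x_{(i,j),k}$ to $x_b$ whose interior is entirely non‑inner; any repeated vertex in it would create either a cycle among non‑inner vertices only, or a cycle through $x_{(i,j),k}$ with all other vertices outside $V_I^{(i,j)}$ — both ruled out by Condition \ref{con3}, which in particular forbids the walk from returning to $x_{(i,j),k}$ — so it is a bona fide $I$‑path from $x_{(i,j),k}$ to $x_b$. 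By the uniqueness of $I$‑paths out of $x_{(i,j),k}$ (Condition \ref{con2}) this $I$‑path coincides with the one already present in $T_k^{(i,j)}$, whence $v\to q$ is a tree edge and $q\in\mathrm{Ch}(v)$. The delicate point is making this last invocation of Condition \ref{con2} airtight — pinning down exactly which inner vertices the $I$‑paths out of $x_{(i,j),k}$ are permitted to terminate at, so that $x_b$ is guaranteed to be one of them; once that is settled, the acyclicity bookkeeping in (A) and the polytree cancellation in (C) are straightforward.
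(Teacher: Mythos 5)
Your proposal is correct and follows the same overall architecture as the paper's proof: the non-inner case is dispatched identically from \eqref{code21}, and the inner case is the same XOR-telescoping over the non-leaf vertices of $T_k^{(i,j)}$ at depth greater than zero, in which messages of non-leaf vertices at depth at least two cancel in pairs, the XOR with $w_k^{(i,j)}$ removes the leaves at depth at least two, and what survives is $x_{(i,j),k}$ plus its out-neighbours, all of which are side-information. The one substantive difference is how the supporting fact is established. Your fact (B) is exactly the paper's Lemma \ref{lemma1} (a non-inner vertex of $T_k^{(i,j)}$ has the same out-neighbourhood in the tree as in $G$), but the paper proves it by comparing two trees $T_k^{(i,j)}$ and $T_{k'}^{(i',j')}$ that both contain $x_s$ --- first showing the leaf sets fanning out from $x_s$ coincide, then the out-neighbourhoods, each time exhibiting two distinct $I$-paths to a common target in contradiction with Condition \ref{con2} --- whereas you argue directly inside a single tree, extending any $G$-edge $v\to q$ to an $I$-path from the root via Condition \ref{con4} and invoking uniqueness. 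Your route is arguably cleaner and in fact supplies the step the paper leaves implicit at the very end of Lemma \ref{lemma1} (that every $G$-out-edge of $x_s$ is realised in some tree); the ``delicate point'' you flag, namely that the target $x_b$ of the spliced path is one of the inner vertices to which Condition \ref{con2} guarantees a unique path from $x_{(i,j),k}$, is closed by Condition \ref{con4} (all $I$-paths through a non-inner vertex terminate in a single semi-inner vertex set), which is the same ingredient the paper uses. Your fact (C), the parity cancellation of the $t-1$ shared vertices so that $w_k^{(i,j)}$ equals the XOR over $V_k^{(i,j)}$, is asserted without proof in the paper's setup; making it explicit is a small improvement rather than a divergence.
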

\begin{proof}
Proof is given in appendix. 
\end{proof}
The following theorem establish the index coding capacity and broadcast rate of OIC structures. 

\begin{theorem}
The index coding capacity $C(G)$ of an OIC structure $G$ with $s$ semi-inner vertex set is given by 
\begin{align*}
C(G)=\frac{1}{|V_{NI}|+s}. 
\end{align*}
\end{theorem}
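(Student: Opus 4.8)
The plan is to sandwich the broadcast rate $\beta(G)$ between $|V_{NI}|+s$ from both sides and then invoke $C(G)=1/\beta(G)$.

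For achievability (an upper bound on $\beta(G)$), I would appeal directly to the index code of \eqref{code11} and \eqref{code21}: it has exactly one symbol $y_I^{(i,j)}$ per semi-inner vertex set and one symbol $y_k$ per non-inner vertex, so its length is $|V_{NI}|+s$, and Theorem~\ref{dp} already certifies that every receiver decodes its wanted message. Applying this scalar code coordinatewise to $t$-bit messages gives $\beta_t(G)\le(|V_{NI}|+s)t$ for every $t$, hence $\beta(G)\le|V_{NI}|+s$.

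For the converse I would use the $MAIS$ bound \eqref{mr1}: it suffices to exhibit an induced acyclic subgraph of $G$ on $|V_{NI}|+s$ vertices, since then $\beta(G)\ge MAIS(G)\ge|V_{NI}|+s$. Take $S=V_{NI}\cup\{r_{(i,j)}\}$, where for each semi-inner vertex set $V_I^{(i,j)}$ the vertex $r_{(i,j)}\in\tilde{V}_I^{(i,j)}$ is chosen so that it is not shared with any child of $V_I^{(i,j)}$ in the polytree. Such a vertex exists because $\tilde{V}_I^{(i,j)}$ is $V_I^{(i,j)}$ with its parent-common vertices deleted, so $|\tilde{V}_I^{(i,j)}|=|V_I^{(i,j)}|-(\#\text{ parents})$, which Condition~\ref{con1} forces to exceed the number of children. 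Since $r_{(i,j)}$ is shared with neither a parent nor a child of $V_I^{(i,j)}$ and the sets $\tilde{V}_I^{(i,j)}$ partition $V_I^{\text{Total}}$, the vertex $r_{(i,j)}$ lies in no semi-inner set other than $V_I^{(i,j)}$; in particular the $r_{(i,j)}$ are distinct, $|S|=|V_{NI}|+s$, and $S\cap V_I^{(i,j)}=\{r_{(i,j)}\}$ for every $(i,j)$.

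The heart of the argument is showing that the induced subgraph $G[S]$ is acyclic, and this is the step I expect to be the main obstacle. Suppose $G[S]$ contained a directed cycle $\mathcal{C}$. If $\mathcal{C}$ used no inner vertex, it would be a cycle on non-inner vertices only, contradicting the last clause of Condition~\ref{con3}. Otherwise $\mathcal{C}$ meets an inner vertex of $S$; the inner vertices of $S$ are precisely the $r_{(i,j)}$, so $\mathcal{C}$ passes through some $r_{(i,j)}\in V_I^{(i,j)}$. By Condition~\ref{con3}, no cycle through $r_{(i,j)}$ can stay within $\{r_{(i,j)}\}\cup\bigl(V(G)\setminus V_I^{(i,j)}\bigr)$, so $\mathcal{C}$ must contain a second vertex of $V_I^{(i,j)}$, which (being on $\mathcal{C}$) lies in $S$ --- contradicting $S\cap V_I^{(i,j)}=\{r_{(i,j)}\}$. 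Hence $G[S]$ is acyclic, $MAIS(G)\ge|V_{NI}|+s$, and combining with achievability yields $\beta(G)=|V_{NI}|+s$, i.e. $C(G)=1/(|V_{NI}|+s)$. The subtle point --- and the reason Condition~\ref{con1}'s cardinality hypothesis is needed --- is that the representatives must be genuinely private to their semi-inner set: picking an arbitrary vertex of each $\tilde{V}_I^{(i,j)}$ would fail, since a child-common choice would place two vertices of $S$ in the child's semi-inner set and break the counting above.
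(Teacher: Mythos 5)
Your proof follows essentially the same route as the paper: achievability from the explicit code of \eqref{code11} and \eqref{code21}, and the converse via the MAIS bound \eqref{mr1} applied to the induced subgraph on all non-inner vertices together with one private representative per semi-inner vertex set. Your version is in fact more careful than the paper's own, which asserts the acyclicity of this induced subgraph in one line without the existence argument for private representatives or the Condition~\ref{con3} case analysis that you supply.
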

\begin{proof}
In \eqref{code11} and \eqref{code21}, we constructed an index code for OIC structure with length $|V_{NI}|+s$. Hence, we have
\begin{align}
\label{beta1}
\beta(G) \leq |V_{NI}|+s.
\end{align}

According to the definition of OIC structure, there exits atleast one vertex in each semi-inner vertex set that is not common to any other semi-inner vertex set. According to the definition of OIC structure, there exists no $I$-cycle with any inner vertex. Let this vertex be $x_{(i,j),k}$ in $V^{(i,j)}_I$ for $i \in [0:d]$ and $j \in [1:w_i]$. There exist $s$ vertices like this in $s$ semi-inner vertex sets one in each semi-inner vertex set. Consider the induced subgraph of these $s$ inner vertices and all non-inner vertices. Let this induced subgraph be $G_I$. According to the definition of OIC structure, there are no cycles among non-inner vertex set. Hence, $G_I$ is acyclic. We have
\begin{align}
\label{beta2}
\text{MAIS}(G)\geq |V_{NI}|+s.
\end{align}
But MAIS(G) gives the lowerbound on $\beta(G)$. Hence, from \eqref{beta1} and \eqref{beta2}, we have 
\begin{align*}
\beta(G)=|V_{NI}|+s.
\end{align*}
The index coding capacity is the reciprocal of the broadcast rate. This completes the proof.
\end{proof}

In the OIC structures, if the number of semi-inner vertex sets is one ($s=1$), then the OIC structure becomes an IC structure. Hence, when $s=1$, the encoding, decoding and optimality results in this paper exactly match the results given by Thapa, Ong and Johnson in \cite{ICC}.

The following six examples illustrates the encoding and decoding of OIC structures. 

\begin{example}
\label{ex99}
Consider the index coding problem with side-information graph given in Figure \ref{fig2}. In $G$, we have $V_I^{(0,1)}=\{x_1,x_3,x_4\}$ and $V_I^{(1,1)}=\{x_3,x_5,x_6\}$. The polytree structure of two inner vertex sets are shown on Figure \ref{fig34}. For $G$, the index code obtained from \eqref{code11} and \eqref{code21} is given below.
\begin{align*}
\mathfrak{C}=\{\underbrace{\underbrace{x_1+x_3+x_4}_{\in V_I^{(0,1)}}}_{y_I^{(0,1)}},~\underbrace{\underbrace{x_3+x_5+x_6}_{\in V_I^{(1,1)}}}_{y_I^{(1,1)}},~\underbrace{\underbrace{x_2}_{\in V_{NI}}+\underbrace{x_1}_{N_{G}^+(x_2)}\}}_{y_2}.
\end{align*}

Trees $T^{(0,1)}_1,T^{(0,1)}_3,T^{(0,1)}_4,T^{(1,1)}_5$ and $T^{(1,1)}_6$ corresponding to the inner vertices $x_1,x_3,x_4,x_5$ and $x_6$ are given in Figure \ref{fig63}. The decoding of each message symbol from $\mathfrak{C}$ is summarised in Table \ref{table11}. 

\begin{table}[ht]
\centering
\setlength\extrarowheight{3pt}
\begin{tabular}{|c|c|c|c|}
\hline
$x_k$ & Tree &$\gamma_k$ & $\tau_k$ \\
\hline
$x_1$ & $T_1$& $y_I^{(0,1)}$& $x_1+\underbrace{x_3+x_4}_{\text{side-information}}$ \\
\hline
$x_2$ & $\in V_{NI}$ & $y_2$ & $x_2+\underbrace{x_1}_{\text{side-information}}$  \\
\hline
$x_3$ &$T_3$ & $y_I^{(0,1)},y_2$ &$x_3+\underbrace{x_2+x_4}_{\text{side-information}}$  \\
\hline
$x_4$ & $T_4$ & $y_I^{(0,1)},y_I^{(1,1)},y_2$ & $x_4+\underbrace{x_2+x_5+x_6}_{\text{side-information}}$ \\
\hline
$x_5$ & $T_5$& $y_I^{(1,1)}$& $x_5+\underbrace{x_3+x_6}_{\text{side-information}}$ \\
\hline
$x_6$ & $T_6$ & $y_I^{(1,1)}$& $x_6+\underbrace{x_3+x_5}_{\text{side-information}}$ \\
\hline
\end{tabular}
\vspace{5pt}
\caption{Decoding of ICP described by Figure \ref{fig2}}
\label{table11}
\end{table}
\begin{figure}
\centering
\includegraphics[scale=0.4]{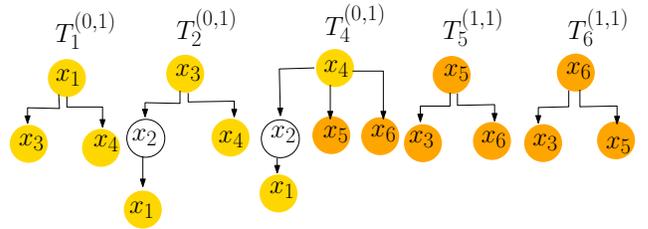}\\
\caption{Trees of inner vertices in Figure \ref{fig2}.}
\label{fig63}
\end{figure}
\end{example}

\begin{example}
\label{ex100}
Consider the index coding problem with side-information graph given in Figure \ref{fig3}. In the graph $G$, we have $V_I^{(0,1)}=\{x_1,x_4,x_5\},V_I^{(1,1)}=\{x_2,x_4,x_6\}$ and $V_I^{(2,1)}=\{x_3,x_6,x_7\}$. The polytree structure of the three semi-inner vertex sets are shown on Figure \ref{fig35}. The index code obtained from \eqref{code11} and \eqref{code21} is given below.
\begin{align*}
\mathfrak{C}=\{&\underbrace{\underbrace{x_1+x_4+x_5}_{\in V_I^{(0,1)}}}_{y_I^{(0,1)}},~\underbrace{\underbrace{x_2+x_4+x_6}_{\in V_I^{(1,1)}}}_{y_I^{(1,1)}},~\underbrace{\underbrace{x_3+x_6+x_{7}}_{\in V_I^{(2,1)}}}_{y_I^{(2,1)}},\\&\underbrace{\underbrace{x_8}_{\in V_{NI}}+\underbrace{x_1}_{N_{G}^+(x_8)}}_{y_8},~\underbrace{\underbrace{x_9}_{\in V_{NI}}+\underbrace{x_2}_{N_{G}^+(x_9)}}_{y_9},~\underbrace{\underbrace{x_{10}}_{\in V_{NI}}+\underbrace{x_3}_{N_{G}^+(x_{10})}}_{y_{10}}\}.
\end{align*}

The seven trees corresponding to the seven inner vertices are given in Figure \ref{fig64}. The decoding of each message symbol from $\mathfrak{C}$ is summarised in Table \ref{table12}. 

\begin{table}[ht]
\centering
\setlength\extrarowheight{3pt}
\begin{tabular}{|c|c|c|c|}
\hline
$x_k$ & Tree &$\gamma_k$ & $\tau_k$ \\
\hline
$x_1$ & $T_1$& $y_I^{(0,1)}$& $x_1+\underbrace{x_4+x_5}_{\text{side-information}}$ \\
\hline
$x_2$ & $T_2$ & $y_I^{(1,1)},y_I^{(2,1)}$ & $x_2+\underbrace{x_3+x_4+x_7}_{\text{side-information}}$  \\
\hline
$x_3$ &$T_3$ & $y_I^{(2,1)}$ &$x_3+\underbrace{x_6+x_7}_{\text{side-information}}$  \\
\hline
$x_4$ & $T_4$ & $y_I^{(0,1)},y_8$ & $x_4+\underbrace{x_5+x_8}_{\text{side-information}}$ \\
\hline
$x_5$ & $T_5$& $y_I^{(0,1)},y_I^{(1,1)},$& $x_5+\underbrace{x_6+x_8+x_9}_{\text{side-information}}$ \\
~ & ~& $y_8,~~y_9$& ~ \\
\hline
$x_6$ & $T_6$ & $y_I^{(1,1)}$& $x_6+\underbrace{x_2+x_4}_{\text{side-information}}$ \\
\hline
$x_7$ & $T_7$ & $y_I^{(2,1)}$& $x_7+\underbrace{x_4+x_{11}}_{\text{side-information}}$ \\
\hline
$x_8$ & $\in V_{NI}$ & $y_8$& $x_8+\underbrace{x_1}_{\text{side-information}}$ \\
\hline
$x_9$ & $\in V_{NI}$ & $y_9$& $x_9+\underbrace{x_2}_{\text{side-information}}$ \\
\hline
$x_{10}$ & $\in V_{NI}$ & $y_{10}$& $x_{10}+\underbrace{x_3}_{\text{side-information}}$ \\
\hline
\end{tabular}
\vspace{5pt}
\caption{Decoding of ICP described by Figure \ref{fig3}}
\label{table12}
\end{table}
\begin{figure}
\centering
\includegraphics[scale=0.4]{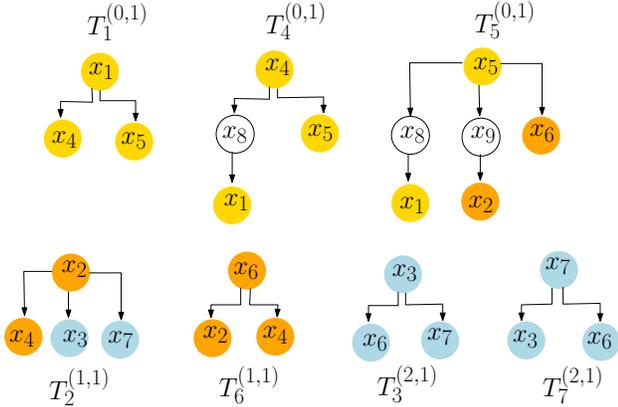}\\
\caption{Trees of inner vertices of Figure \ref{fig3}.}
\label{fig64}
\end{figure}
\end{example}

\begin{example}
\label{ex101}
Consider the index coding problem with side-information graph given in Figure \ref{fig351}. The index code for this index coding problem is same as that of the index code for the index coding problem described by the side-information graph Figure \ref{fig3}. 

The seven trees corresponding to the seven inner vertices are given in Figure \ref{fig66}. The decoding of each message symbol is exactly same as that of the decoding of each message symbol in Example \ref{ex100}. The decoding at each receiver is summarised in Table \ref{table12}. 

\begin{figure}
\centering
\includegraphics[scale=0.4]{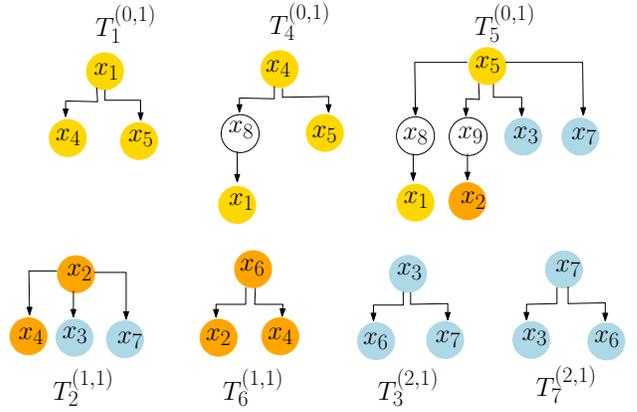}\\
\caption{Trees of inner vertices in Figure \ref{fig351}.}
\label{fig66}
\end{figure}
\end{example}

\begin{example}
\label{ex102}
Consider the side-information given in Figure \ref{fig31}. In this graph, we have $V_I^{(0,1)}=\{x_1,x_3,x_5,x_6\}$ and $V_I^{(1,1)}=\{x_2,x_4,x_5,x_7\}$. The polytree structure of the semi-inner vertex sets are given in Figure \ref{fig36}. The index code obtained from \eqref{code11} and \eqref{code21} is given below.
\begin{align*}
\mathfrak{C}=\{&\underbrace{\underbrace{x_1+x_3+x_5+x_6}_{\in V_I^{(0,1)}}}_{y_I^{(0,1)}},~\underbrace{\underbrace{x_2+x_4+x_5+x_7}_{\in V_I^{(1,1)}}}_{y_I^{(1,1)}},\\&\underbrace{x_8+x_1}_{y_8},~\underbrace{x_9+x_2}_{y_9},~\underbrace{x_{10}+x_3}_{y_{10}},~\underbrace{x_{11}+x_4}_{y_{11}},~\underbrace{x_{12}+x_5}_{y_{12}}\}.
\end{align*}

\begin{figure}
\centering
\includegraphics[scale=0.4]{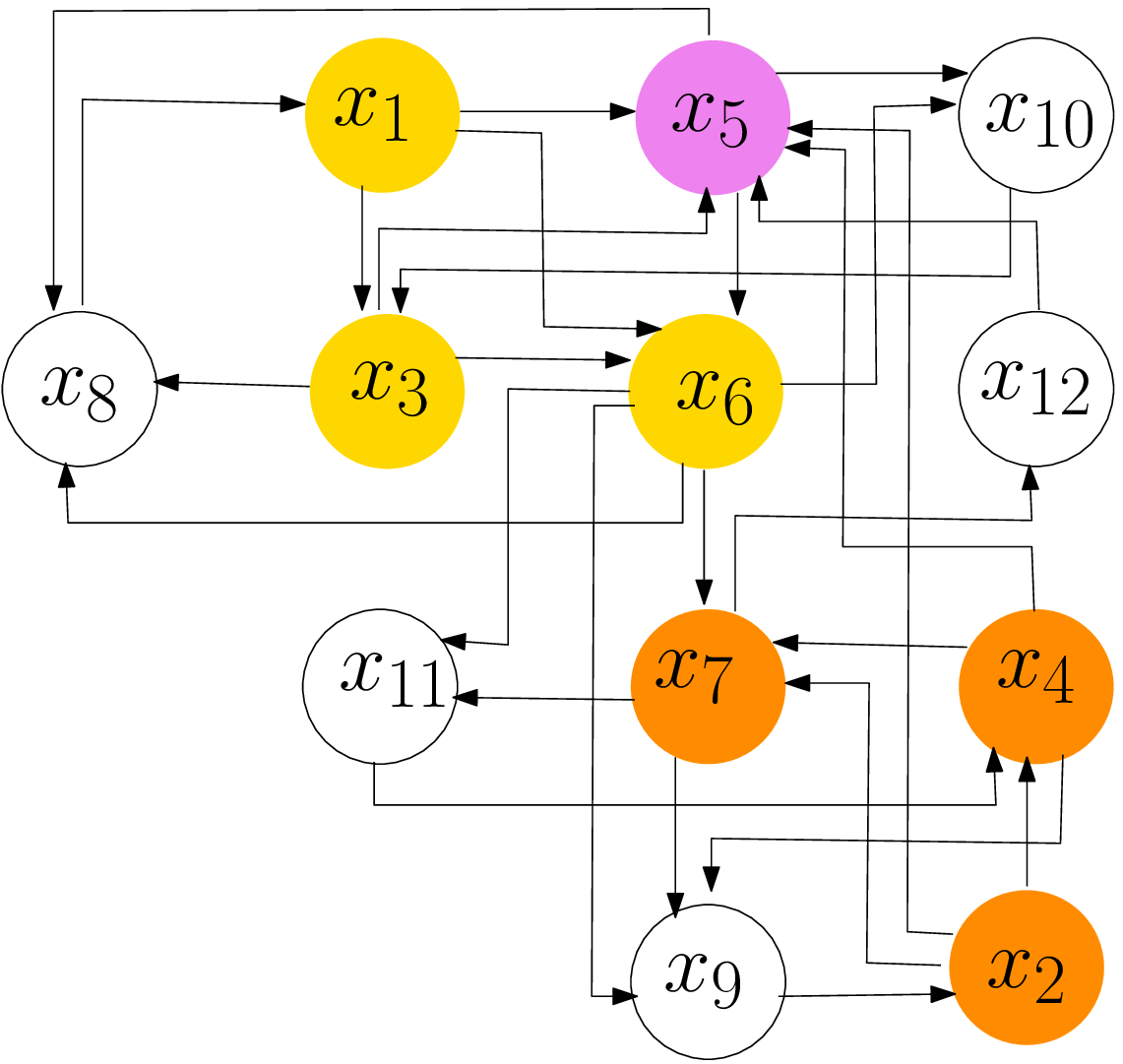}\\
\caption{Overlapping IC structure with capacity $\frac{1}{6}$.}
\label{fig31}
\end{figure}

\begin{figure}
\centering
\includegraphics[scale=0.4]{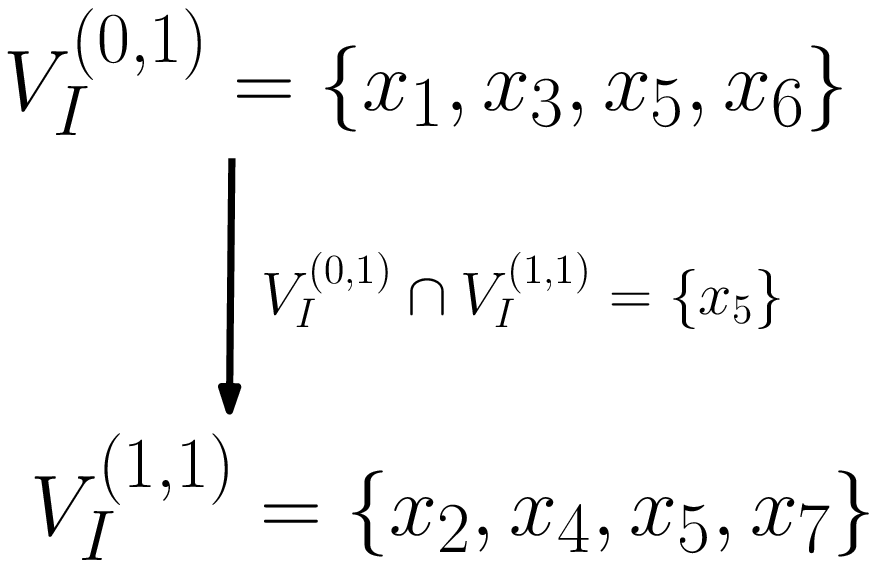}\\
\caption{Polytree of semi-inner vertex sets of OIC structure given in Figure \ref{fig31}.}
\label{fig36}
\end{figure}

\begin{figure}
\centering
\includegraphics[scale=0.4]{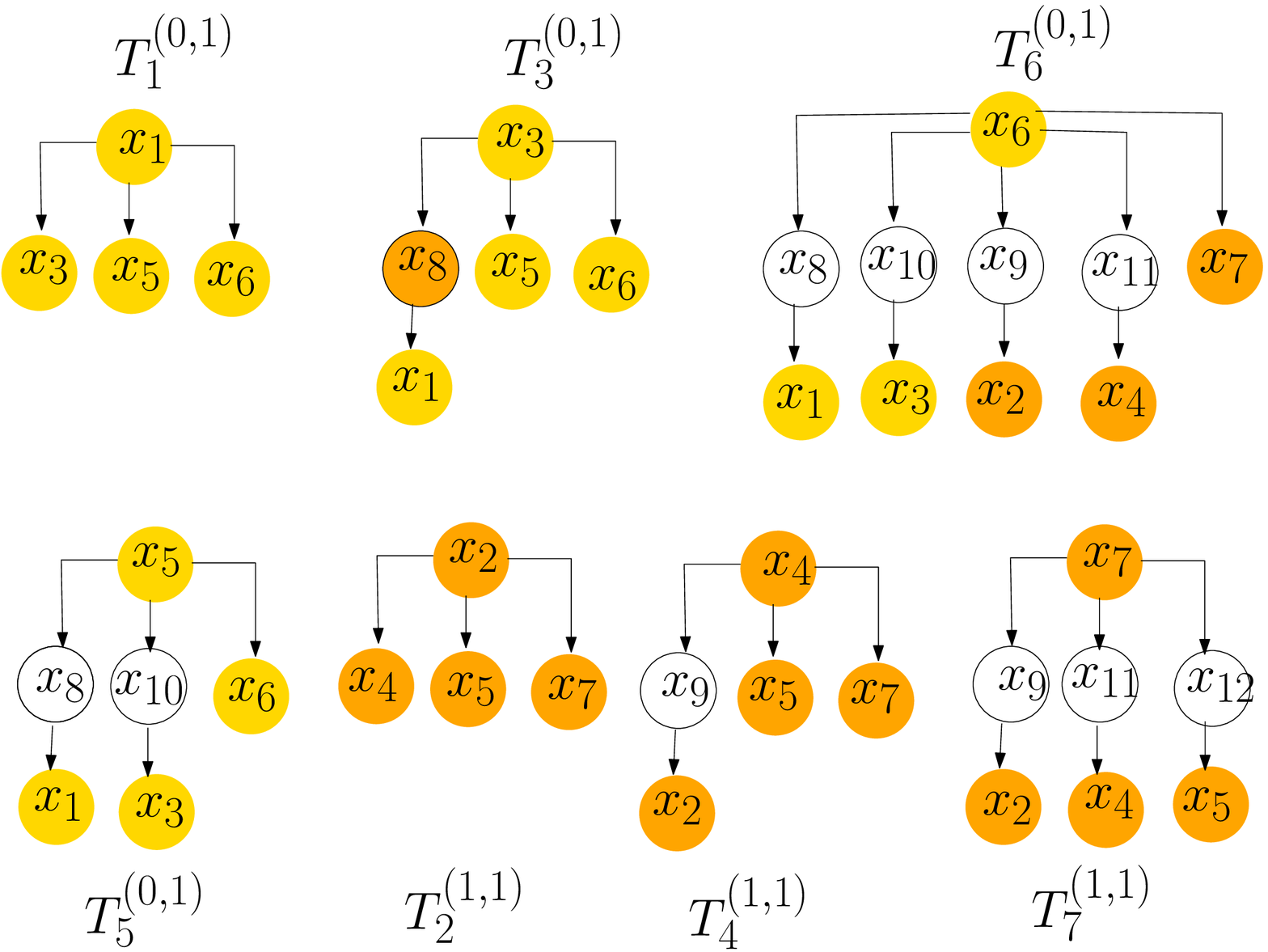}\\
\caption{Trees of inner vertices in Figure \ref{fig31}.}
\label{fig661}
\end{figure}

The seven trees corresponding to the seven inner vertices are given in Figure \ref{fig661}. The decoding of each message symbol from $\mathfrak{C}$ is summarised in Table \ref{table14}. 

\begin{table}[ht]
\centering
\setlength\extrarowheight{3pt}
\begin{tabular}{|c|c|c|c|}
\hline
$x_k$ & Tree &$\gamma_k$ & $\tau_k$ \\
\hline
$x_1$ & $T_1$& $y_I^{(0,1)}$& $x_1+\underbrace{x_3+x_5+x_6}_{\text{side-information}}$ \\
\hline
$x_2$ & $T_2$ & $y_I^{(1,1)}$ & $x_2+\underbrace{x_4+x_5+x_7}_{\text{side-information}}$  \\
\hline
$x_3$ &$T_3$ & $y_I^{(0,1)},y_8$ &$x_3+\underbrace{x_8+x_5+x_6}_{\text{side-information}}$  \\
\hline
$x_4$ & $T_4$ & $y_I^{(1,1)},y_9$ & $x_4+\underbrace{x_9+x_5+x_7}_{\text{side-information}}$ \\
\hline
$x_5$ & $T_5$& $y_I^{(0,1)},y_8,y_{10}$& $x_5+\underbrace{x_8+x_{10}+x_6}_{\text{side-information}}$ \\
\hline
$x_6$ & $T_6$ & $y_I^{(0,1)},y_I^{(1,1)},y_8,$& $x_6+$ \\
 & & $y_9,~y_{10},~y_{1,1}$ & $\underbrace{x_6+x_8+x_9+x_{10}+x_{11}}_{\text{side-information}}$\\
\hline
$x_7$ & $T_7$& $y_I^{(1,1)},y_9,y_{10},y_{11}$& $x_7+\underbrace{x_9+x_{11}+x_{12}}_{\text{side-information}}$ \\
\hline
$x_8$ & $\in V_{NI}$ & $y_8$ & $x_8+\underbrace{x_1}_{\text{side-information}}$  \\
\hline
$x_9$ & $\in V_{NI}$  & $y_9$ &$x_9+\underbrace{x_2}_{\text{side-information}}$  \\
\hline
$x_{10}$ &$\in V_{NI}$ & $y_{10}$ & $x_{10}+\underbrace{x_3}_{\text{side-information}}$ \\
\hline
$x_{11}$ & $\in V_{NI}$& $y_{11}$& $x_{11}+\underbrace{x_4}_{\text{side-information}}$ \\
\hline
$x_{12}$ & $\in V_{NI}$ & $y_{12}$& $x_{12}+\underbrace{x_5}_{\text{side-information}}$ \\
\hline
\end{tabular}
\vspace{5pt}
\caption{Decoding of ICP described by Figure \ref{fig31}}
\label{table14}
\end{table}
\end{example}

\begin{example}
\label{ex103}
Consider the index coding problem described by the side-information given in Figure \ref{fig39}. In this graph, we have $V_I^{(0,1)}=\{x_1,x_2,x_3,x_4\},V_I^{(1,1)}=\{x_3,x_5,x_6\}$ and $V_I^{(1,2)}=\{x_5,x_7,x_8\}$. The polytree structure of the semi-inner vertex sets are given in Figure \ref{fig392}. The index code obtained from \eqref{code11} and \eqref{code21} is given below.
\begin{align*}
\mathfrak{C}=\{&\underbrace{\underbrace{x_1+x_2+x_3+x_4}_{\in V_I^{(0,1)}}}_{y_I^{(0,1)}},~\underbrace{\underbrace{x_3+x_5+x_6}_{\in V_I^{(1,1)}}}_{y_I^{(1,1)}},,~\underbrace{\underbrace{x_5+x_7+x_8}_{\in V_I^{(1,1)}}}_{y_I^{(1,2)}}\}.
\end{align*} 
\begin{figure}
\centering
\includegraphics[scale=0.5]{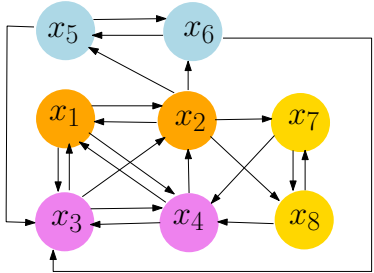}\\
\caption{Overlapping IC structure with capacity $\frac{1}{3}$.}
\label{fig39}
\end{figure}

\begin{figure}
\centering
\includegraphics[scale=0.5]{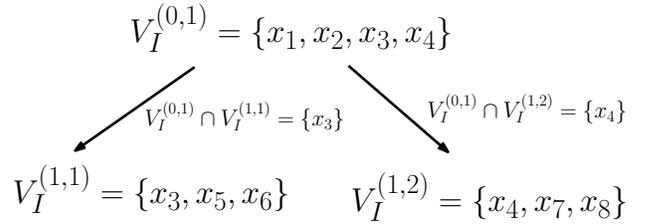}\\
\caption{Polytree of OIC given in Figure \ref{fig39}.}
\label{fig392}
\end{figure}

The eight trees corresponding to the eight inner vertices are given in Figure \ref{fig38}. The decoding of each message symbol from $\mathfrak{C}$ is summarised in Table \ref{table15}. 

\begin{figure}
\centering
\includegraphics[scale=0.4]{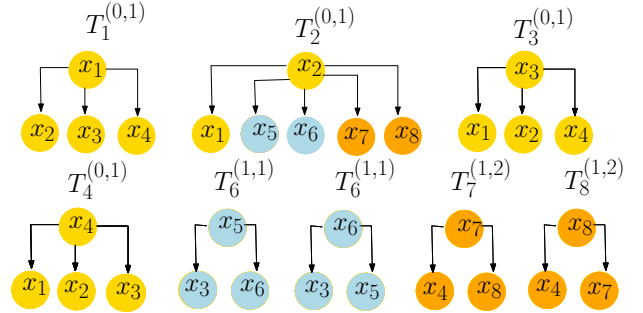}\\
\caption{Trees of inner vertices of Figure \ref{fig39}.}
\label{fig38}
\end{figure}

\begin{table}[ht]
\centering
\setlength\extrarowheight{3pt}
\begin{tabular}{|c|c|c|c|}
\hline
$x_k$ & Tree &$\gamma_k$ & $\tau_k$ \\
\hline
$x_1$ & $T_1$& $y_I^{(0,1)}$& $x_1+\underbrace{x_2+x_3+x_4}_{\text{side-information}}$ \\
\hline
$x_2$ & $T_2$ & $y_I^{(0,1)},y_I^{(1,1)},y_I^{(1,2)}$ & $x_2+\underbrace{x_1+x_5+x_6+x_7+x_8}_{\text{side-information}}$  \\
\hline
$x_3$ &$T_3$ & $y_I^{(0,1)}$ &$x_3+\underbrace{x_1+x_2+x_4}_{\text{side-information}}$  \\
\hline
$x_4$ & $T_4$ & $y_I^{(0,1)}$ & $x_4+\underbrace{x_1+x_2+x_3}_{\text{side-information}}$ \\
\hline
$x_5$ & $T_5$& $y_I^{(1,1)}$& $x_5+\underbrace{x_3+x_6}_{\text{side-information}}$ \\
\hline
$x_6$ & $T_6$ & $y_I^{(1,1)}$& $x_6+\underbrace{x_3+x_5}_{\text{side-information}}$ \\
\hline
$x_7$ & $T_7$& $y_I^{(1,2)}$& $x_7+\underbrace{x_4+x_{8}}_{\text{side-information}}$ \\
\hline
$x_8$ & $T_8$ & $y_I^{(1,2)}$ & $x_8+\underbrace{x_4+x_7}_{\text{side-information}}$  \\
\hline
\end{tabular}
\vspace{5pt}
\caption{Decoding of ICP described by Figure \ref{fig39}}
\label{table15}
\end{table}
\end{example}

\begin{example}
\label{ex104}
Consider the index coding problem described by the side-information given in Figure \ref{fig311}. The graph $G$ is an overlapping interlinked cycle structure with four inner vertex sets $\{x_1,x_2,x_3\},\{x_4,x_5,x_6\},\{x_1,x_6,x_7,x_8\}$ and $\{x_4,x_7,x_{8}\}$. The polytree structure of the semi-inner vertices is given in Figure \ref{fig352}. The index code obtained from \eqref{code11} and \eqref{code21} is given below.
\begin{align*}
\mathfrak{C}=\{&\underbrace{x_1+x_2+x_3}_{y_I^{(0,1)}},~\underbrace{x_4+x_5+x_6}_{y_I^{(1,1)}},,~\underbrace{x_3+x_4++x_7+x_8}_{y_I^{(1,2)}},\\&\underbrace{x_7+x_9+x_{10}}_{y_I^{(2,1)}}\}.
\end{align*} 

\begin{figure}
\centering
\includegraphics[scale=0.4]{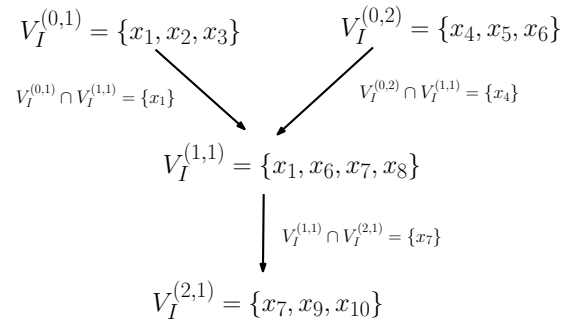}\\
\caption{Polytree of semi-inner vertex sets of OIC structure given in Figure \ref{fig311}.}
\label{fig352}
\end{figure}
\end{example}

\begin{table}[ht]
\centering
\setlength\extrarowheight{3pt}
\begin{tabular}{|c|c|c|c|}
\hline
$x_k$ & Tree &$\gamma_k$ & $\tau_k$ \\
\hline
$x_1$ & $T_1$& $y_I^{(0,1)}$& $x_1+\underbrace{x_2+x_3}_{\text{side-information}}$ \\
\hline
$x_2$ & $T_2$ & $y_I^{(0,1)}$ & $x_2+\underbrace{x_1+x_3}_{\text{side-information}}$  \\
\hline
$x_3$ &$T_3$ & $y_I^{(0,1)},y_I^{(1,1)}$ &$x_3+\underbrace{x_2+x_6+x_7+x_8}_{\text{side-information}}$  \\
\hline
$x_4$ & $T_4$ & $y_I^{(0,2)},y_I^{(1,1)}$ & $x_4+\underbrace{x_1+x_5+x_7+x_8}_{\text{side-information}}$ \\
\hline
$x_5$ & $T_5$& $y_I^{(0,2)}$& $x_5+\underbrace{x_4+x_6}_{\text{side-information}}$ \\
\hline
$x_6$ & $T_6$ & $y_I^{(0,2)}$& $x_6+\underbrace{x_4+x_5}_{\text{side-information}}$ \\
\hline
$x_7$ & $T_7$& $y_I^{(1,1)}$& $x_7+\underbrace{x_1+x_6+x_8}_{\text{side-information}}$ \\
\hline
$x_8$ & $T_8$ & $y_I^{(1,1)}$ & $x_8+\underbrace{x_1+x_6+x_9+x_{10}}_{\text{side-information}}$  \\
\hline
$x_9$ & $T_9$& $y_I^{(2,1)}$& $x_7+\underbrace{x_7+x_{10}}_{\text{side-information}}$ \\
\hline
$x_{10}$ & $T_{10}$ & $y_I^{(2,1)}$ & $x_{10}+\underbrace{x_7+x_9}_{\text{side-information}}$  \\
\hline
\end{tabular}
\vspace{5pt}
\caption{Decoding of ICP given in Example \ref{ex104}.}
\label{table16}
\end{table}
The ten trees corresponding to the all inner vertices are given in Figure \ref{fig381}. The decoding of each message symbol from $\mathfrak{C}$ is summarised in Table \ref{table16}. 

\begin{figure}
\centering
\includegraphics[scale=0.4]{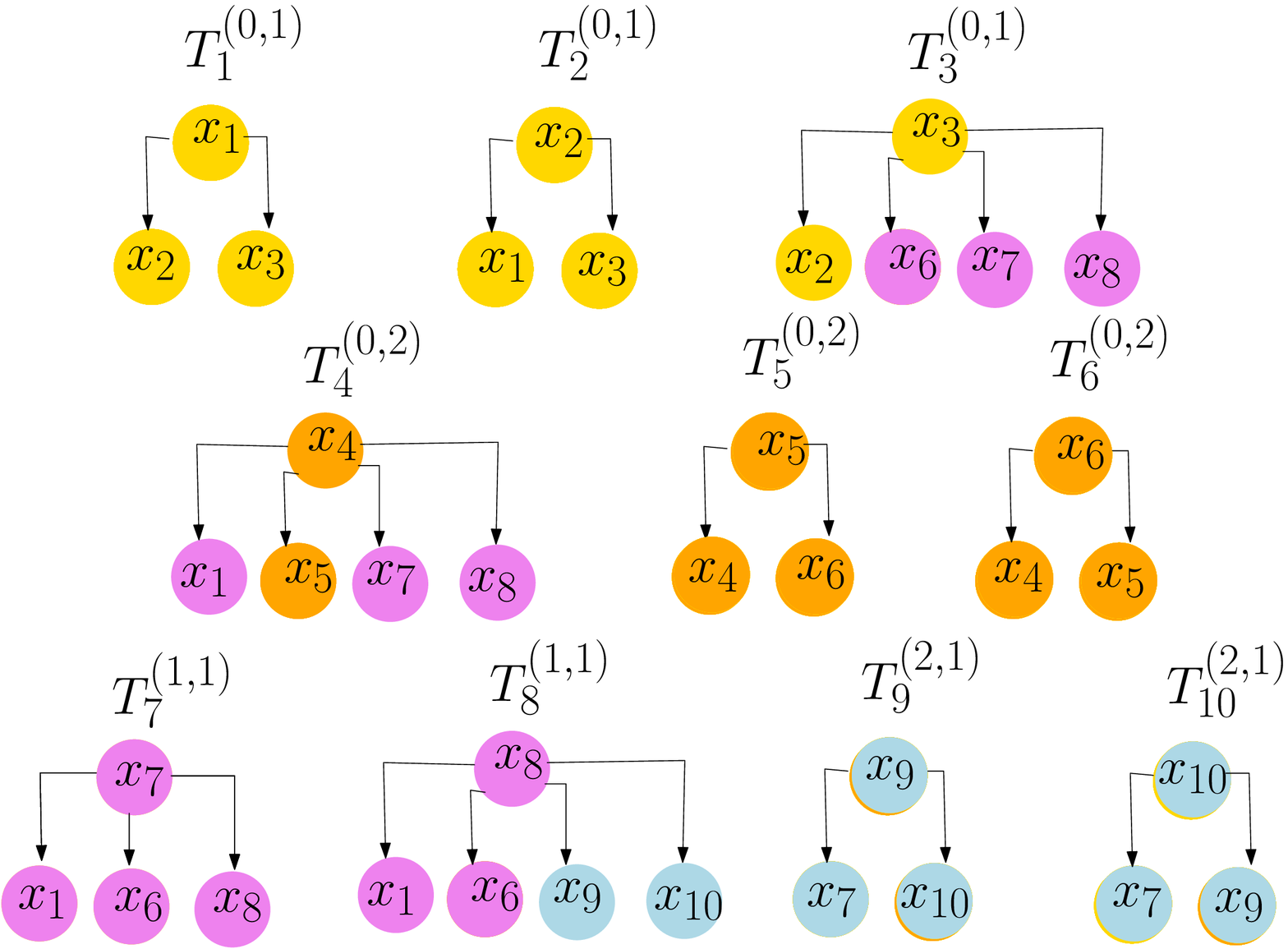}\\
\caption{Trees of inner vertices of Figure \ref{fig311}.}
\label{fig381}
\end{figure}

\section*{Acknowledgement}
This work was supported partly by the Science and Engineering Research Board (SERB) of Department of Science and Technology (DST), Government of India, through J.C. Bose National Fellowship to B. Sundar Rajan.

\begin{appendix}
\begin{center}
Proof of Theorem \ref{dp} 
\end{center}

To prove Theorem \ref{dp}, we need certain properties of the tree $T_k^{(i,j)}$ of inner vertex $x_{(i,j),k} \in V_I^{(i,j)}$ for every $i \in [0:d]$ and $j \in [1:w_i]$. In the following lemma, we prove important property of $T_k^{(i,j)}$.
\begin{lemma}
\label{lemma1}
For any non-inner vertex $x_{s} \in V(T_k^{(i,j)})$ for $i \in [0:d]$ and $j \in [1:w_i]$, the out-neighborhood of $x_s$ is same in $T_k^{(i,j)}$ and $G$.
\end{lemma}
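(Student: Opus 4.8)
The plan is to argue directly from the construction of the tree $T_k^{(i,j)}$ and from the structural Conditions \ref{con1}--\ref{con4}. Recall that $T_k^{(i,j)}$ is built as a directed rooted tree with root $x_{(i,j),k}$ whose leaves are exactly the inner vertices of $V_k^{(i,j)}\setminus\{x_{(i,j),k}\}$, and whose internal (non-leaf, depth $\ge 1$) vertices are non-inner vertices lying on the $I$-paths that realize the tree; the edges of $T_k^{(i,j)}$ are edges of $G$, since every $I$-path is a path in $G$. So for any non-inner $x_s \in V(T_k^{(i,j)})$ we automatically have $N^+_{T_k^{(i,j)}}(x_s) \subseteq N^+_G(x_s)$, and the whole content of the lemma is the reverse inclusion $N^+_G(x_s) \subseteq N^+_{T_k^{(i,j)}}(x_s)$: every out-neighbour of $x_s$ in $G$ must already appear as a child of $x_s$ in the tree.

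First I would recall (from Condition \ref{con4}) that every non-inner vertex lies on at least one $I$-path and that \emph{all} outgoing paths from a non-inner vertex terminate at vertices of a single semi-inner vertex set; combined with Condition \ref{con2} (uniqueness of $I$-paths) this pins down the set of $I$-paths through $x_s$. The key observation is that if $x_s$ has an out-neighbour $x_q$ in $G$, then the edge $x_s \to x_q$ either continues an $I$-path toward the relevant semi-inner set, or (if $x_q$ is a non-inner vertex not on such a path, or if the edge created a second route) would contradict one of the structural conditions. Concretely: $x_q$ cannot be a non-inner vertex that fails to lie on an $I$-path toward the same termination set, because then $x_s \to x_q$ together with $x_q$'s own outgoing paths (which by Condition \ref{con4} all go to one semi-inner set) would give $x_s$ a path violating either the single-termination-set clause of Condition \ref{con4} or the uniqueness clause of Condition \ref{con2}; and $x_q$ cannot be an inner vertex other than the unique one terminating the $I$-path through $x_s$, again by uniqueness in Condition \ref{con2} and the acyclicity Condition \ref{con3}. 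Hence every out-neighbour $x_q$ of $x_s$ in $G$ is forced to be the next vertex on the (essentially unique) $I$-path through $x_s$ used in building $T_k^{(i,j)}$, so $x_q$ is a child of $x_s$ in the tree.

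I would organize the argument by cases on the type of $x_q$ (inner vs.\ non-inner, and whether it equals a common vertex $x_{(i,j),j_k}$ of the polytree), in each case deriving a forbidden configuration — a second $I$-path, an $I$-cycle, a cycle among non-inner vertices, or outgoing paths from $x_s$ terminating in two different semi-inner sets — from the assumption that $x_q \notin N^+_{T_k^{(i,j)}}(x_s)$. I expect the main obstacle to be the bookkeeping when $x_s$ sits on an $I$-path that traverses several semi-inner vertex sets of the polytree (the $V_P^{(i,j),(i^\prime,j^\prime)}$ situation of Condition \ref{con2}): there one must check that the out-edges of $x_s$ are consistent with the path staying within the prescribed chain of semi-inner sets and passing through the prescribed common vertices, using Condition \ref{con1} (each polytree edge is a single shared vertex, and each $V_I^{(i,j)}$ has more vertices than parents-plus-children) to rule out the tree "branching" at $x_s$ in a way not mirrored in $G$. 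Once that case is handled, the remaining cases are short applications of the no-$I$-cycle and no-non-inner-cycle conditions.
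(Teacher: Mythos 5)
Your proposal is correct in outline but takes a genuinely different route from the paper. The paper never argues directly about $N^+_G(x_s)$; instead it compares \emph{trees pairwise}: it first shows that for any two trees $T_k^{(i,j)}$ and $T_{k'}^{(i',j')}$ containing the non-inner vertex $x_s$, the sets of leaves fanning out from $x_s$ coincide (otherwise two distinct $I$-paths from $x_{(i',j'),k'}$ to some leaf $x_a$ would exist, contradicting Condition \ref{con2}), then upgrades this to equality of the out-neighbourhoods of $x_s$ in the two trees by the same multiple-$I$-path contradiction, and finally asserts that agreement across all trees containing $x_s$ gives agreement with $G$. You instead prove the reverse inclusion $N^+_G(x_s)\subseteq N^+_{T_k^{(i,j)}}(x_s)$ head-on: every edge $x_s\to x_q$ of $G$ extends (via Conditions \ref{con3} and \ref{con4}) to an $I$-path from the root to a leaf, which by the uniqueness clause of Condition \ref{con2} must be the tree path, so the edge is a tree edge. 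The underlying engine is the same (uniqueness of $I$-paths plus the single-termination-set and acyclicity conditions), but your decomposition is arguably cleaner on the one point where the paper is terse: the paper's final step tacitly assumes that every $G$-out-edge of $x_s$ already appears in \emph{some} tree, which is exactly what your direct argument establishes, whereas the pairwise-agreement argument alone does not. One caution: your phrase ``the next vertex on the (essentially unique) $I$-path through $x_s$'' understates that $T_k^{(i,j)}$ may branch at $x_s$ (several $I$-paths from the root can diverge there); the claim you actually need, and which your case analysis does support, is that the \emph{set} of tree-children of $x_s$ exhausts $N^+_G(x_s)$, each out-edge being forced into the tree by the uniqueness of the $I$-path to whichever leaf it leads to.
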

\begin{proof}
 To prove this lemma, first we prove that for any non inner vertex $x_{s} \in V(T_k^{(i,j)})\cap V(T_{k^\prime}^{(i^\prime, j^\prime)})$, the out-neighborhood of vertex $x_{s}$ is same in both the trees $T_k^{(i,j)}$ and $T_{k^\prime}^{(i^\prime, j^\prime)}$. That is, we prove that 
\begin{align}
\label{on}
N^+_{T^{(i,j)}_k}(x_s)=N^+_{T^{(i^\prime, j^\prime)}_{k^\prime}}(x_s).
\end{align}
where $N_{T^{(i,j)}_k}(x_s)$ and $N_{T^{(i^\prime, j^\prime)}_{k^\prime}}(x_s)$ are the out-neighborhoods of vertex $x_{s}$ in $T_k^{(i,j)}$ and $T_{k^\prime}^{(i^\prime, j^\prime)}$ respectively.

Let $L_{T^{(i,j)}_k}(x_s)$ be the set of leaf vertices that fan out from the vertex $x_{s}$. First we prove that $L_{T^{(i,j)}_k}(x_s)=L_{T^{(i^\prime, j^\prime)}_{k^\prime}}(x_s)$. Suppose $L_{T^{(i,j)}_k}(x_s) \neq L_{T^{(i^\prime, j^\prime)}_{k^\prime}}(x_s)$. In Condition \ref{con4} of OIC structure, we assumed that all the $I$-paths passing through a non-inner vertex must terminate at the vertices belonging to only one vertex subset $V_I^{(i,j)}$. Hence, we have $L_{T^{(i,j)}_k}(x_s)$ is a subset of $V_k^{(i,j)}\setminus \{x_{(i,j),k},x_{(i^\prime, j^\prime),k^\prime}\}$. This follows from the fact that there exists no $I$-cycle with the vertices $x_{(i,j),k}$ and $x_{(i^\prime, j^\prime),k^\prime}$ (Condition \ref{con3} of OIC definition). Let $x_a \in V_k^{(i,j)}\setminus \{x_{(i,j),k},x_{(i^\prime, j^\prime),k^\prime}\}$ such that $x_a \in L_{T^{(i,j)}_k}(x_s)$ but $x_a \notin L_{T^{(i^\prime, j^\prime)}_{k^\prime}}(x_s)$. Such vertex $x_a$ exists because we assumed $L_{T^{(i,j)}_k}(x_s) \neq L_{T^{(i^\prime, j^\prime)}_{k^\prime}}(x_s)$. 

In tree $T^{(i,j)}_k$, there exists a directed path from $x_{(i,j),k}$, which includes $x_s$ and the leaf vertex $x_a$. Let this path be $P_1$. Similarly, in tree $T^{(i^\prime, j^\prime)}_{k^\prime}$, there exists a directed path from $x_{(i^\prime, j^\prime),k^\prime}$, which does not includes $x_s$ (since $x_a \notin L_{T^{(i^\prime, j^\prime)}_{k^\prime}}(x_s)$) and ends at the leaf vertex $x_a$. Let this path be $P_2$. In $G$, we can also obtain a directed path from $x_{(i^\prime, j^\prime),k^\prime}$ which pass through $x_s$ and ends at the leaf vertex $x_a$. Let this path be $P_3$. The paths $P_2$ and $P_3$ are different, which indicates the existence of multiple paths from $x_{(i^\prime, j^\prime),k^\prime}$ to $x_a$. This is a contradiction from Condition \ref{con2} of OIC definition. Hence, we have 
\begin{align*}
L_{T^{(i,j)}_k}(x_s)=L_{T^{(i^\prime, j^\prime)}_{k^\prime}}(x_s). 
\end{align*}

Let $N^+_{T^{(i,j)}_k}(x_s)\neq N^+_{T^{(i^\prime, j^\prime)}_{k^\prime}}(x_s).$ We prove a contradiction. Let $x_b \in N_{T^{(i,j)}_{k}}(x_s)$ but $x_b \notin N_{T^{(i^\prime, j^\prime)}_{k^\prime}}(x_s)$ (such $x_b$ exists in the out-neighborhood of either $T^{(i,j)}_k$ or $T^{(i^\prime, j^\prime)}_{k^\prime}$ follows from the fact that we assumed that the out-neighborhood of $x_s$ is not same in both the trees). The vertex $x_b$ may belongs to $L_{T^{(i,j)}_k}(x_s)$ or it may not belongs to $L_{T^{(i,j)}_k}(x_s)$. If we assume that the vertex $x_b$ belongs to $L_{T^{(i,j)}_k}(x_s)$, in the above para we proved that this leads to multiple paths from $x_{k^\prime}^{(i^\prime, j^\prime)}$ to $x_b$. Hence, we assume $x_b$ does not belong to $L_{T^{(i,j)}_k}(x_s)$. Let $x_d$ be a leaf vertex such that $x_d \in L_{T^{(i,j)}_k}(x_b)$ and there exists a path from $x_s$ to $x_b$ to $x_d$ in $T^{(i,j)}_k$. We have a path from $x_{(i^\prime, j^\prime),k^\prime}$ to $x_s$ exists in $T^{(i^\prime, j^\prime)}_{k^\prime}$. Thus a path from  $x_{(i^\prime, j^\prime),k^\prime}\Rightarrow \ldots x_s\Rightarrow x_b \Rightarrow \ldots \Rightarrow x_d$ exists in $G$. Since $L_{T^{(i,j)}_k}(x_s)=L_{T^{(i^\prime, j^\prime)}_{k^\prime}}(x_s)$, we have $x_d \in L_{T^{(i^\prime, j^\prime)}_{k^\prime}}(x_s)$. In $T^{(i^\prime, j^\prime)}_{k^\prime}$, there exists a path from $x_{(i^\prime, j^\prime),k^\prime}$ to $x_d$, which includes $x_s$ followed by a vertex which is present in $N_{T^{(i^\prime, j^\prime)}_{k^\prime}}(x_s)$ and this path is different from $x_{(i^\prime, j^\prime),k^\prime}\Rightarrow \ldots x_s\Rightarrow x_b \Rightarrow \ldots \Rightarrow x_d$. Hence, multiple $I$-paths are present between $x_d$ and $x_{(i^\prime, j^\prime),k^\prime}$. This is a contradiction. Hence, we have $N_{T^{(i,j)}_k}(x_s)=N_{T^{(i^\prime, j^\prime)}_{k^\prime}}(x_s).$

Since,  \eqref{on} is true for every two trees comprising $x_s$, we have the out-neighborhood of $x_s$ is same in $T_k^{(i,j)}$ and $G$.
\end{proof}

\subsection*{Decoding of non-inner vertices}
For every $x_k \in V_{NI}$, receiver wanting $x_k$ can be decoded it from the index code symbol $y_k$ of \eqref{code11}. This follows from the fact that the receiver wanting $x_k$ knows all the messages present in $y_k$ other than $x_k$ (the out neighbourhood  of $x_k$ in $G$). 

\subsection*{Decoding of inner vertices}
From Lemma \ref{lemma1}, for any vertex $x_{s} \in V(T_k^{(i,j)})=V_k^{(i,j)}$ for $i \in [0:d]$ and $j \in [1:w_i]$ and $x_s \in V_{NI}$, the out-neighborhood of $x_s$ is same in $T_k^{(i,j)}$ and $G$. Hence, for every $x_{(i,j),k} \in V_I^{(i,j)}$ $i \in [0:d]$ and $j \in [1:w_i]$, the receiver wanting $x_{(i,j),k}$ can decode it by using the tree $T_k^{(i,j)}$ as shown below. 

In $T_k^{(i,j)}$, let $z_k^{(i,j)}$ be the XOR of index code symbols corresponding to all non-leaf vertices at depth greater than zero. In $T_k^{(i,j)}$, the message requested by a non-leaf vertex, say $x_{k^\prime}$, at a depth strictly greater than one appears exactly twice in  $z_k^{(i,j)}$ as described below.
\begin{itemize}
\item Once in the index code corresponding to $x_{k^\prime}$.
\item Once in the index code corresponding to the parent of $x_{k^\prime}$ in $T_k^{(i,j)}$.
\end{itemize}
Hence, they cancel out each other in $z_k^{(i,j)}$ and the terms remaining in $z_k^{(i,j)}$ are XOR of the following.
\begin{itemize}
\item Messages requested by all non-leaf vertices of $T_k^{(i,j)}$ at depth one.
\item Messages requested by all leaf vertices of $T_k^{(i,j)}$ at depth strictly greater than one.
\end{itemize}

In $T_k^{(i,j)}$, the XOR of $w_k^{(i,j)}$ with $z_k^{(i,j)}$ gives the following along with $x_{(i,j),k}$.
\begin{itemize}
\item The message requested by all non-leaf vertices at depth one, which are out-neighbors of $x_{(i,j),k}$.
\item The messages requested by all leaf vertices at depth one, which are also the out-neighbors of $x_{(i,j),k}$.
\end{itemize}
This follows from the fact that the messages requested by each leaf vertex at depth strictly greater than one in the tree $T_k^{(i,j)}$ is present in both $z_k^{(i,j)}$ and $w_k^{(i,j)}$ and hence cancel each other. Hence, XOR of $z_k^{(i,j)}$ and $w_k^{(i,j)}$ gives  $x_{(i,j),k}$ and all the vertices in the out neighbourhood of $x_{(i,j),k}$ in $T_k^{(i,j)}$. As the receiver wanting  $x_{(i,j),k}$ knows all its out neighbors as side-information, the receiver decodes $x_{(i,j),k}$ from $z_k^{(i,j)}\oplus w_k^{(i,j)}$.
\end{appendix}
\end{document}